\pgfplotsset{width=10.5cm, compat=1.5}
\newcommand*{\rom}[1]{\expandafter\@slowromancap\romannumeral #1@}
\newtheorem{lemma}{Lemma}
\newtheorem{definition}{Definition}
\newtheorem{proposition}{Proposition}
\newtheorem{remark}{Remark}
\newcommand{\ac}{algebraic connectivity}
\newcommand{\Ac}{Algebraic connectivity}
\newcommand{\oa}{outer-approximation}
\title{Optimal Robust Network Design: Formulations and Algorithms for Maximizing Algebraic Connectivity}
\author{Neelkamal Somisetty\IEEEauthorrefmark{3}, Harsha Nagarajan\IEEEauthorrefmark{2},  Swaroop Darbha\IEEEauthorrefmark{3} 
\thanks{\IEEEauthorrefmark{2} Applied Mathematics and Plasma Physics (T-5), Los Alamos National Laboratory, Los Alamos, NM, USA. Email:  \href{mailto:harsha@lanl.gov}{harsha@lanl.gov}}
\thanks{\IEEEauthorrefmark{3} Department of Mechanical Engineering, Texas A \& M University,
College Station, TX, USA. Email:  \href{mailto:neelkamal.sept18@tamu.edu}{neelkamal.sept18@tamu.edu}, \href{mailto:dswaroop@tamu.edu}{dswaroop@tamu.edu}}
}
\begin{document}

\maketitle
\thispagestyle{empty}
\pagestyle{empty}
\begin{abstract}
This paper focuses on designing edge-weighted networks, whose robustness is characterized by maximizing {\ac}, or \textcolor{black}{the second smallest eigenvalue of the Laplacian matrix.} This problem is motivated by cooperative vehicle localization, where accurately estimating relative position measurements and establishing communication links are essential. We also examine an associated problem where every robot is limited by payload, budget, and communication to pick no more than a specified number of relative position measurements. The basic underlying formulation for these problems is nonlinear and is known to be NP-hard. Our approach formulates this problem as a Mixed Integer Semi-Definite Program (MISDP), later reformulated into a Mixed Integer Linear Program (MILP) for obtaining optimal solutions using cutting plane algorithms. We introduce a novel upper-bounding algorithm based on principal minor characterization of positive semi-definite matrices and discuss a degree-constrained lower bounding formulation inspired by robust network structures. \textcolor{black}{In addition, we propose a maximum cost heuristic with low computational complexity to identify high-quality feasible solutions for instances involving up to one hundred nodes.} We show extensive computational results corroborating our proposed methods.
\end{abstract}

\begin{IEEEkeywords}
Robust networks, {\Ac}, Graph Laplacian, Cutting planes, Positive semi-definite matrix, Heuristic
\end{IEEEkeywords}

\section{Introduction}
\label{Sec:intro}

\IEEEPARstart{S}{ynthesis} of networks with robust connectivity/rigidity is crucial for emerging engineering applications. {\Ac} of a network, denoted as the second smallest eigenvalue of the Laplacian matrix, serves as a robustness metric, gaining significant interest in both graph theory \cite{shakeri2016maximizing, jamakovic2007relationship} and engineering perspectives \cite{varshney2010distributed}. 
In synthesizing ad-hoc UAV rigid formations, maximizing {\ac} subject to an area coverage constraint is a key problem \cite{nagarajan2015synthesizing}. {\Ac} serves as a criterion to sparsify networks in simultaneous localization and mapping (SLAM). Here, network edges correspond to memory requirements for storing observations and computational expenses for state estimation algorithms, that grow unbounded during long-term navigation \cite{doherty2022spectral, boumal2014cramer}. In \cite{wei2013algebraic}, weighted {\ac} is used to analyze the robustness of air transportation networks. \textcolor{black}{In multi-agent networks, networks with higher {\ac} are preferred due to their correlation with faster convergence speeds in consensus algorithms \cite{ogiwara2015maximizing}.} In optical communication satellite networks, {\ac} serves as a robustness measure for degree-constrained spanning trees \cite{liu2019dynamic, zheng2017weighted}. In \cite{cheung2021improving}, the authors propose to enhance the connectivity of a compromised digital logistics network, subject to budget constraints, by maximizing its {\ac}. {\Ac} also characterizes properties in multi-layer networks with arbitrary interconnections \cite{tavasoli2020maximizing}. In distributed sensor networks \cite{griparic2021algebraic} and multi-agent systems \cite{sharma2014observability,6112245}, network connectivity is expressed by {\ac}. \textcolor{black}{In \cite{ding2020connectivity}, authors use {\ac} as a measure to maintain well-connected mobile networks.} 

This paper addresses a simplified version of the robust network synthesis problem, the resolution of which remains open. The problem aims to find a sub-network with at most $q$ (a constant) edges from a complete weighted network, maximizing the {\ac} of the weighted Laplacian of the sub-network. However, this problem is NP-hard \cite{mosk2008maximum}. \textcolor{black}{In \cite{nagarajan2012algorithms}, the authors proposed an iterative primal-dual algorithm to identify a spanning tree network with maximum {\ac} from a complete weighted graph. However, it was computationally intractable for graphs with more than nine nodes.} Several neighborhood search heuristic methods have been proposed for obtaining sub-optimal solutions without quality guarantees \cite{trimble2019connectivity, ghosh_growing_2006,nagarajan2014heuristics}. The fragment and selection-merging (FSM) heuristic algorithm \cite{son2010building} iteratively merges network fragments until a spanning tree is formed. Evaluating the quality of heuristic solutions (lower bounds) often involves the relaxation of binary variables, which typically results in weak upper bounds \cite{wei2013algebraic,zheng2017weighted}. An earlier version of our work \cite{nagarajan2015maximizing} introduced an upper-bounding formulation based on Fiedler vectors, but its quality varied based on feasible solutions and the number of Fiedler vectors used. While theoretical upper bounds exist for optimal {\ac} on unweighted networks \cite{ghosh2006upper}, rigorous methods for obtaining tight upper bounds for weighted networks remains challenging. 

Distinguishing itself from \cite{nagarajan2015maximizing}, this paper's key contributions are: $(i)$ we derive and show that {\ac} serves as a robustness measure for the problem of cooperative vehicle localization with noisy measurements. We also explore a related problem concerning networks with degree constraints (payload/budget/communication), which is of independent interest, $(ii)$ we introduce a \textit{novel} upper bounding formulation and algorithm for maximizing {\ac} of weighted networks, formulated as a mixed-integer semi-definite program (MISDP). This approach is based on the hierarchy of principal minor characterization of a positive semi-definite (PSD) matrix. We also derive relatively simpler mixed integer linear and second-order conic programs, which relax the MISDP and provide tight upper bounds, $(iii)$  we propose a degree-constrained lower bounding formulation (DCLBF), which mimics the structures of robust networks, thereby considerably reducing the search space of feasible solutions, $(iv)$ \textcolor{black}{lastly, we propose a ranking-based ``maximum cost heuristic'' to efficiently obtain high-quality solutions for the DCLBF, improving scalability for networks with \textit{up to 100 nodes} and outperforming the FSM algorithm \cite{son2010building}.}

The article is structured as follows: Section \rom{2} introduces the problem of maximizing {\ac} for a vehicle localization application. Section \rom{3} presents a mathematical formulation as an MISDP with connectivity constraints. Sections \rom{4} and \rom{5} discuss an upper-bounding formulation based on principal minor characterization and a degree-constrained lower bounding formulation (DCLBF), respectively, to efficiently solve the MISDP. Section \rom{6} proposes a heuristic for quickly finding high-quality feasible solutions. Finally, computational results and concluding remarks are provided in Sections \rom{7} and \rom{8}, respectively.

\section{Maximization of {\Ac}}
\label{Sec:obj}

In this section, we emphasize the significance of choosing {\ac} as the maximization objective when searching for a sub-network within a weighted network. We illustrate this importance using the cooperative vehicle localization application, drawing motivation from \cite{1067998}. 

\textbf{Notation.} In the following sections, we use lower and upper case to represent scalars (vector/matrix elements). Bold font with lower and upper cases to represent vectors and matrices, respectively. The tensor product of two vectors $\mathbf{v_1, v_2}$ in the same vector space is denoted by $\mathbf{v_1 \otimes v_2}$, and their dot product by $\mathbf{v_1 \cdot v_2}$. For any vector $\mathbf{v} \in \mathbb{R}^n$, $\lVert \mathbf{v}\rVert: \mathbb{R}^n \rightarrow \mathbb{R}$ defines the 2-norm, given by $\sqrt{\mathbf{v \cdot v}}$. For any matrix $\mathbf{M} \in \mathbb{R}^{n\times n}$, $\mathbf{M}'$ denotes the transpose of $\mathbf{M}$ and $\lVert \mathbf{M}\rVert: \mathbb{R}^{n \times n} \rightarrow \mathbb{R}$ defines the spectral norm, given by the largest singular value of $\mathbf{M}$. Given two square symmetric matrices $\mathbf{A}$ and $\mathbf{B}$, $\mathbf{A} \succeq \mathbf{B}$ implies $\mathbf{A-B} \succeq 0$, i.e., $\mathbf{A-B}$ is a positive semi-definite (PSD) matrix. Let $\mathbf{e_i}$ denote the $i^{th}$ column of the identity matrix $\mathbf{I}_n$ of size $n \times n$. \textcolor{black}{Let $\mathbf{0}_{n \times n}$ be a zero matrix of size $n \times n$ and $\mathbf{1}$ be an $n$-dimensional vector of ones. For any non-empty set $S$, the notation $|S|$ represents the cardinality of the set, while $\emptyset$ refers to an empty set.}

\subsection{Cooperative vehicle localization with noisy measurements}
\label{subsec:coop_vehicle}
Consider a collection of $n$ vehicles moving in a straight line. The state of the $i^{th}$ vehicle is given by its position $x_i(t)$. The $i^{th}$ vehicle has the following measurements:
\setitemize{leftmargin=*, align=left}
\begin{itemize}
   \item {\color{black} Its velocity measurement contaminated by noise, expressed as $ v_{m,i}(t) = v_i(t) + \zeta_i(t),$ where $v_i(t)$ is its velocity and $\zeta_i(t)$ is the noise.}
   
   \item Relative position measurement with others in the collection. Let ${\mathcal S}_i$ denote the set of vehicles with which the $i^{th}$ vehicle can communicate. \textcolor{black}{The relative position information available is represented by $z_{ij}(t) = x_i(t) - x_j(t) + \eta_{ij}(t), \forall \ j \in {\mathcal S}_i,$
   where $z_{ij}(t)$ denotes the relative position measurement between $i^{th}$ vehicle and $j^{th}$ vehicle at time $t$, and $\eta_{ij}(t)$ represents the noise associated with the measurement.} Additionally, for the case of reference vehicle, the relative position measurement is given by $z_{11}(t) = x_1(t) + \textcolor{black}{\eta_{11}(t)}.$
   \item At least one vehicle has an absolute position measurement available; otherwise, it would only be possible to localize relative to each other, but not with respect to a ground frame.
\end{itemize}
We assume that both $\zeta_i(t)$ and $\eta_{ij}(t)$ are independent Gaussian random processes with known statistics. Note that if a measurement $z_{ij}$ is available to the $i^{th}$ vehicle, then the measurement $z_{ji}$ is available to the $j^{th}$ vehicle by assumption. The noise processes across different edges are assumed independent. Given a communication topology, our objective is to find the best possible estimate of the states of the vehicles in the collection, in the least square sense. 

Let $\mathbf{x} \in \mathbbm{R}^n$ represent a vector whose $i^{th}$ component is $x_i$. Then, given the model of the vehicle, $\dot x_i(t) = v_i(t)$, let an observer to estimate the unknowns be: 
\begin{eqnarray*}
\dot {\hat x}_i(t) &=& v_{m,i} (t) +  \sum_{j \in {\mathcal S}_i}K_{ij}z_{ij}(t), i \neq 1, \\
\dot {\hat x}_1(t)&=&v_{m,1}(t) + \sum_{j\in {\mathcal S}_1} K_{1j}z_{1j}(t) + k_0z_{11}(t), 
\end{eqnarray*}
with an associated estimation error given by $\hat x_i(t) - x_i(t)$. 
It is convenient to rewrite the model of the vehicle in terms of the measurement:
$\dot x_i(t) = v_{m,i} - \zeta_i(t), $ and treat $\zeta_i(t)$ as a process noise.
Hence, the state evolution can be compactly represented in the standard form: 
{\color{black}$$\dot {\mathbf{x}}(t) =  \mathbf{0}_{n \times n}{\mathbf{x}(t)} +  \mathbf{I}_n(\mathbf{v}_m(t) +  \boldsymbol{\zeta}(t)). $$}
Let $\mathbf{H}$ represent a concatenation of $n+1$ blocks of row matrices of 
dimension $n$, namely, $\mathbf{H}_i, i=1, 2, \ldots, n+1$.  For $i \le n$, the $j^{th}$  row of $\mathbf{H}_i$ (namely $\mathbf{h}_{ij}$) corresponds to the measurement
$z_{ij}$ as $z_{ij}(t) = \mathbf{h}_{ij}{\mathbf{x}}(t) + \eta_{ij}(t)$. The $(n+1)^{st}$ 
block corresponds to the measurement $z_{11}(t) = \mathbf{H}_{n+1}{\mathbf{x}}(t) + \textcolor{black}{\eta_{11}(t)}$.  In essence, we may express the measurements compactly as ${\mathbf{z}} = \mathbf{H}{\mathbf{x}}(t) + \boldsymbol{\eta}; $ In this case, the dimensions of $\mathbf{z}, \boldsymbol{\eta}$ are equal to $1+\sum_{i=1}^n |{\mathcal S}_i|$.

Now, we can use the {\it Kalman Filter} set up. The Algebraic Riccati equation determines the optimal steady-state filter gain:
$$\mathbf{PF' + FP - PH'R}^{-1}\mathbf{HP +GQG'} = 0, $$ 
where $\mathbf{P}$ is the covariance of the state estimation error, $\mathbf{R}$ is the covariance of the sensor noise $\boldsymbol{\eta}$, $\mathbf{Q}$ is the covariance of the process noise $\boldsymbol{\zeta}$. \textcolor{black}{Since $\mathbf{R}$ represents the covariance matrix of independent Gaussian random processes, it is diagonal; the $j^{th}$ component of the $i^{th}$ block corresponds to the covariance $\frac{1}{C_{ij}}$ of the random process $\eta_{ij}(t)$.  The Kalman filter gain, denoted as $\mathbf{K}$, is given by $\mathbf{K = PH'R}^{-1}$. Since $\mathbf{F} = \mathbf{0}_{n \times n}$ and $\mathbf{G} = \mathbf{I}_n$, the Riccatti equation reduces to $$\mathbf{Q-PH'R}^{-1}\mathbf{H P} = 0. $$}
\textcolor{black}{Assuming $\mathbf{L = H'R}^{-1}\mathbf{H}$, the Riccatti equation can be recast as shown
$$\mathbf{Q - P L P} = 0 \ \Rightarrow \ \mathbf{L} = \mathbf{P}^{-1}\mathbf{Q}\mathbf{P}^{-1} \ \Rightarrow \  \mathbf{L}^{-1} = \mathbf{P} \mathbf{Q}^{-1} \mathbf{P}. $$
Here, $\mathbf{L}$ represents the discrete Dirichlet Laplacian, where the edge weight for the edge $\{i,j\}$ is $C_{ij}$, and vehicle 1 serves as the reference vehicle with absolute position information. $\mathbf{L}$ is non-singular if and only if $C_{11} >0${ and the \it information flow} network is connected. For further details of $\mathbf{L}$, readers may refer to \cite{nagarajan2015maximizing}.}
Since $\underset{t \rightarrow \infty}{\lim} E[\mathbf{e}(t)\mathbf{e}(t)'] = \mathbf{P}$ and
\begin{eqnarray*}
\|\mathbf{L}^{-1}\| = \|\mathbf{P}\mathbf{Q}^{-1}\mathbf{P}\| \le \|\mathbf{P}\| \|\mathbf{Q}^{-1}\| \|\mathbf{P}\|, 
\end{eqnarray*}
we obtain the following inequality
\begin{equation}
    \|\mathbf{P}\| \ge \sqrt{ \frac{\|\mathbf{L}^{-1}\|}{\|\mathbf{Q}^{-1}\|}}\nonumber.
\end{equation}

To minimize the covariance in state estimation error, $\|\mathbf{P}\|$, we must reduce $\|\mathbf{L}^{-1}\|$ or increase $\|\mathbf{Q}^{-1}\|$. However, the process noise $\mathbf{Q}$ is not a design parameter;  $\mathbf{L}$ depends on the communication topology and can be chosen to minimize $\|\mathbf{L}^{-1}\|$ is minimized (or equivalently, maximize the smallest eigenvalue, $\lambda_1(\mathbf{L})$ is maximized) when selecting the communication topology. 



Associated with an $n \times n$ Dirichlet Laplacian matrix, one can always construct an $(n+1)\times (n+1)$ Laplacian matrix. This matrix has Dirichlet Laplacian as its leading principal sub-matrix, with other entries chosen to ensure zero row and column sums. 
This is equivalent to assuming that the frame containing the reference vehicle's absolute position is in motion, with an unknown origin. The upper bound for $\lambda_2$ of the Laplacian matrix also serves as the upper bound for $\lambda_1$ of the Dirichlet Laplacian matrix, as per the Courant-Fischer theorem \cite{parlett1998symmetric}. Thus, the methods in this paper are also applicable to the Dirichlet Laplacian problem, optimizing edge selection to maximize $\lambda_1$ while satisfying resource constraints.

\subsection{An associated problem of interest}
\label{subsec:Associated}
An independent problem arises when considering payload and/or cost budget constraints in the choice of sensors on every vehicle in the earlier problem. A simple way to model this additional requirement is to restrict the number of range sensors that can be mounted on each vehicle, effectively limiting the degree of every node in the network. This problem also arises in Free Space Optical (FSO) networks \cite{son2010building}. Thus, this problem aims to design a robust spanning tree with maximum {\ac}, limiting the degree of each node by a fixed value $d$. 


\section{Mathematical Formulations to Maximize {\Ac}}
\label{Sec:algconn_proposed_formulation}

Let $(V, E, \mathbf{w})$ represent a weighted graph/network. Without any loss of generality, we will simplify the problem by allowing at most one edge to be connected between any pair of nodes in the network without any self-loops. Let $n$ represent the number of nodes in the network, given by $|V|$. Let $w_{ij} > 0$ and $x_{ij} \in \{0, 1\}$ represent the edge weight and the binary choice variable for every edge $\{i,j\} \in E$, respectively. Let $\mathbf{x}$ be the vector of choice variables, $x_{ij}$. If $x_{ij} = 1$, it implies that the edge is chosen in the construction of the network; otherwise, it is not. Given a set $S \subset V$, let $\delta(S)$ denote the edges in the cutset of $S$, i.e., $\delta(S) = \{\{i,j\} \in E | i \in S, \ j \in V\setminus S \}$. 

We may define $$L_{ij} = w_{ij}(\mathbf{e}_i - \mathbf{e}_j)\otimes(\mathbf{e}_i - \mathbf{e}_j),$$ and correspondingly, the weighted Laplacian matrix as
$$ \mathbf{L(x)} = \sum_{i<j, \{i,j\} \in E}  x_{ij} L_{ij}.$$

Note that $\mathbf{L(x)}$ is a symmetric PSD matrix for a given network $\mathbf{x}$. Let $\lambda_1 ( = 0) \leqslant \lambda_2 \leqslant \lambda_3 \leqslant \ldots \leqslant \lambda_n$ be the eigenvalues of $\mathbf{L(x)}$ and $\mathbf{v}_1, \mathbf{v}_2, \ldots, \mathbf{v}_n$ be the respective eigenvectors, where $\lambda_2$ and $\mathbf{v}_2$ are known as the {\ac} and Fiedler vector, respectively.

\label{Subsec:form}
The basic problem ($\mathcal{BP}$) can be expressed as
\begin{equation}
		\begin{array}{lll}
	  		\text{($\mathcal{BP}$)}  &\gamma^* = &\max  \lambda_2(\mathbf{L(x)}), \\
			&\text{s.t.} & \sum_{i<j, \; \{i,j\} \in E} x_{ij} \leqslant q, \\
			& & \mathbf{x} \in \{0, 1\}^{|E|},
		\end{array}
		\label{eq:BP}
\end{equation}
where $q$ is some positive integer which is an upper bound on the number of edges to be chosen. Since this is a non-linear binary program, it is paramount to represent this in a tractable form. In the remainder of this section, we present an equivalent MISDP formulation for $\mathcal{BP}$.
\subsection{Mixed integer semi-definite program}
\label{Subsec:F1}
Let $ \mathbf{e}_0 = \frac{1}{\sqrt{n}}\textbf{1}$, such that $\lVert \mathbf{e}_0 \rVert_2 = 1$. Then, $\mathcal{BP}$ in \eqref{eq:BP} can be equivalently expressed as the following MISDP: 
{\color{black}
\begin{equation*}
		\begin{array}{lll}
		    \text{(${\mathcal F}_0$)} &\gamma^* = &\max  \; \; \gamma, \\
		    &\text{s.t.} &  \sum_{ i<j, \; \{i,j \}\in E} x_{ij} L_{ij} \succeq \gamma (\mathbf{I}_n - \mathbf{e}_0 \otimes \mathbf{e}_0),\\
			& &\sum_{i < j, \; \{i,j\} \in E} x_{ij} \leqslant q, \\
			& &{\color{black}\mathbf{x} \in \{0, 1\}^{|E|}.}
		\end{array}
\end{equation*}}
In the above formulation ${\mathcal F}_0$, the first constraint enforces that $\gamma$ is the {\ac} of the network, the second constraint limits the number of chosen edges by budget, and the third enforces the binary nature of edge selection. We denote the feasible set of this formulation as ${\mathcal S}({\mathcal F}_0)$. Proof of the correctness of ${\mathcal F}_0$ can be found in \cite{nagarajan2012algorithms}. 

In what follows, we will focus on weighted spanning trees as feasible solutions, where the optimal solution (given by $\mathbf{x}^*$) will be a spanning tree with maximum {\ac}. We choose spanning trees as they represent minimally connected networks, although the algorithms developed in this paper can also be generalized to other network types. To this end, in $\mathcal{F}_0$, we will set $q = n-1$, and let ${\mathcal T}$ represent the set of all $n^{n-2}$ spanning trees (represented by last three constraints of $\mathcal{F}_0$). For ease of exposition of the remaining parts of this paper, we introduce a lifted PSD matrix, $ \mathbf{W} = \mathbf{L(x)} - \gamma (\mathbf{I}_n - \mathbf{e}_0 \otimes \mathbf{e}_0 )$, and reformulate $\mathcal{F}_0$ as the following MISDP: 
\begin{subequations}
\begin{align}
    \text{(${\mathcal F}_1$)} \quad \gamma^* = & \ \max  \ \gamma, \\  \text{s.t.} \ &  \mathbf{W}  \succeq 0, \label{eq:W_psd}\\
		& W_{ii} = \sum_{\{i,j\} \in E} w_{ij}x_{ij} - \frac{\gamma(n-1)}{n}, \ \forall  i \in V, \label{eq:W_ii} \\
		& W_{ij} = W_{ji} = -w_{ij}x_{ij} + \frac{\gamma}{n}, \ \forall  \{i,j\} \in E, \label{eq:W_ij} \\
		& \mathbf{x}  \in \mathcal T \label{eq:xinT}.
\end{align}
\label{eq:F_1}
\end{subequations}
\vspace{-0.5cm}
\subsection{Relaxation \& eigenvector cuts}
\label{subsec:eigencuts}
It is well-known that using general purpose MISDP solvers, solving the network design problem ($\mathcal{F}_1$) in  \eqref{eq:F_1} is a herculean task, owing to its computational complexity \cite{mosk2008maximum}. Hence, one of the goals of this paper is not only to solve $\mathcal{F}_1$ efficiently but also to obtain tight upper (dual) bounds as this information can be very useful to quantify the quality of obtained lower-bounding feasible (primal) solutions. To this end, we will now discuss a simple cutting plane-based {\oa} (OA) procedure. 

We call a formulation ${\mathcal F}$ a relaxation for ${\mathcal F}_1$ if the feasible set of the former contains that of the latter, i.e.,  ${\mathcal S}({\mathcal F}_1) \subset {\mathcal S}({\mathcal F})$. In the relaxed formulations we consider in this paper, we replace the semi-definite constraint \eqref{eq:W_psd} of ${\mathcal F}_1$ with a set of simpler related sets of linear cutting planes/cuts/valid inequalities. We refer to ${\mathcal S}({\mathcal F})$ as an OA of ${\mathcal S}({\mathcal F}_1)$, or more loosely ${\mathcal F}$ as an OA of ${\mathcal F}_1$. Thus, to solve the MISDP in $\mathcal{F}_1$, one can exploit the maturity of \textcolor{black}{mixed-integer linear program (MILP)} solvers by iteratively refining MILP relaxations (${\mathcal F}$)  in a cutting plane fashion.

It is well-known that the PSD matrix, $\textbf{W}$, can be viewed as the following semi-infinite description: 
$$\textbf{W} \succeq 0 \ \Longleftrightarrow \ \mathbf{v \cdot Wv} \geqslant 0 \ \forall \mathbf{v}  \in \mathbb{R}^{n} $$
Instead of infinite such vectors, by choosing a finite number of $\mathbf{v}$-s, the above description can be viewed as an OA (relaxation) of the PSD constraint. However, in the following lemma \eqref{lem:finite_vs}, we show that only a finite number of such vector $\mathbf{v}$-s is sufficient to exactly reformulate the MISDP in $\mathcal{F}_1$ in to an MILP ($\mathcal{F}_2$) as follows: 
\begin{subequations}
\begin{align}
    \text{(${\mathcal F}_2$)} \quad \gamma_2^* = & \ \max  \ \gamma, \\  \text{s.t.} \ &  \mathbf{v\cdot Wv} \geqslant 0, \ \forall \mathbf{v} \in {\mathcal V}_F, \label{eq:vWv}\\
		& \text{Constraints} \ \eqref{eq:W_ii}, \eqref{eq:W_ij} \ \text{and} \ \eqref{eq:xinT}.
\end{align}
\label{eq:F_1_fiedler}
\end{subequations}
where ${\mathcal V}_F$ represents the set of all Fiedler vectors corresponding to spanning tree networks, i.e., $\mathbf{v}_2\mathbf{(L(x))} \ \forall \mathbf{x} \in \mathcal{T}$.
\begin{lemma}
    Let $(\gamma^*, \mathbf{x}^*)$ and $(\gamma_2^*, \mathbf{x}_2^*)$ be the optimal solutions of $\mathcal{F}_1$ and $\mathcal{F}_2$, respectively. Then, $\gamma^* = \gamma_2^*$, and the associated feasible solutions, $\mathbf{x}^* = \mathbf{x}_2^*$.
    \label{lem:finite_vs}
\end{lemma}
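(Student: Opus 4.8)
The plan is to fix an arbitrary spanning tree $\mathbf{x} \in \mathcal{T}$ and show that the largest value of $\gamma$ admissible under the constraints of $\mathcal{F}_1$ equals the largest value admissible under the constraints of $\mathcal{F}_2$. Since both programs maximize the same objective $\gamma$ over the identical discrete set $\mathcal{T}$, with the same affine definitions of $\mathbf{W}$ via \eqref{eq:W_ii}--\eqref{eq:W_ij}, establishing this per-tree equality immediately yields $\gamma^* = \gamma_2^*$ and that the maximizing tree is common, i.e. $\mathbf{x}^* = \mathbf{x}_2^*$.

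First I would characterize the $\mathcal{F}_1$ constraint for a fixed $\mathbf{x}$. Because $\mathbf{L(x)}\mathbf{e}_0 = 0$ and $\mathbf{I}_n - \mathbf{e}_0 \otimes \mathbf{e}_0$ is the orthogonal projector onto $\mathbf{e}_0^{\perp}$, the two matrices are simultaneously diagonalizable: $\mathbf{e}_0$ is a common eigenvector (with eigenvalue $0$ for each), and on $\mathbf{e}_0^{\perp}$ the projector acts as the identity. Hence $\mathbf{W}$ vanishes on $\mathrm{span}(\mathbf{e}_0)$ and coincides with $\mathbf{L(x)} - \gamma \mathbf{I}_n$ on $\mathbf{e}_0^{\perp}$, so its eigenvalues are $0$ together with $\lambda_i(\mathbf{L(x)}) - \gamma$ for $i = 2, \ldots, n$. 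Therefore constraint \eqref{eq:W_psd} holds if and only if $\gamma \leqslant \lambda_2(\mathbf{L(x)})$, and the largest feasible $\gamma$ for this tree is exactly $\lambda_2(\mathbf{L(x)})$.

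Next I would analyze the $\mathcal{F}_2$ cuts \eqref{eq:vWv} for the same $\mathbf{x}$. Every $\mathbf{v} \in \mathcal{V}_F$ is a unit Fiedler vector and hence orthogonal to $\mathbf{1}$ (equivalently to $\mathbf{e}_0$), so $\mathbf{v} \cdot (\mathbf{I}_n - \mathbf{e}_0 \otimes \mathbf{e}_0)\mathbf{v} = 1$ and each cut collapses to $\mathbf{v} \cdot \mathbf{W} \mathbf{v} = \mathbf{v} \cdot \mathbf{L(x)}\mathbf{v} - \gamma \geqslant 0$, i.e. $\gamma \leqslant \mathbf{v} \cdot \mathbf{L(x)}\mathbf{v}$. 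By the Courant--Fischer characterization, $\mathbf{v} \cdot \mathbf{L(x)}\mathbf{v} \geqslant \lambda_2(\mathbf{L(x)})$ for every such $\mathbf{v}$, so no cut can push $\gamma$ below $\lambda_2(\mathbf{L(x)})$. The decisive observation is that $\mathcal{V}_F$ contains the Fiedler vector $\mathbf{v}_2(\mathbf{L(x)})$ of the tree $\mathbf{x}$ itself, for which equality is attained; this single cut already enforces $\gamma \leqslant \lambda_2(\mathbf{L(x)})$. Thus the tightest feasible $\gamma$ for this tree under $\mathcal{F}_2$ is again exactly $\lambda_2(\mathbf{L(x)})$.

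Combining the two steps, for every $\mathbf{x} \in \mathcal{T}$ the best attainable $\gamma$ is $\lambda_2(\mathbf{L(x)})$ in both formulations, whence $\gamma^* = \max_{\mathbf{x} \in \mathcal{T}} \lambda_2(\mathbf{L(x)}) = \gamma_2^*$ with a common maximizer. I expect the main subtlety to be precisely the self-consistency of $\mathcal{V}_F$: the cut family is only a relaxation (so a priori $\gamma_2^* \geqslant \gamma^*$), and equality hinges entirely on the fact that each tree's own Fiedler vector lies in $\mathcal{V}_F$ and attains the Rayleigh-quotient minimum. I would therefore take care to confirm the normalization ($\lVert \mathbf{v}\rVert = 1$) and orthogonality ($\mathbf{v} \cdot \mathbf{e}_0 = 0$) conventions for the Fiedler vectors, so that the quadratic form reliably reduces to $\mathbf{v} \cdot \mathbf{L(x)}\mathbf{v} - \gamma$.
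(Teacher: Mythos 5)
Your proof is correct and follows essentially the same route as the paper's: both arguments reduce the claim to the per-tree statement that the largest admissible $\gamma$ equals $\lambda_2(\mathbf{L(x)})$ in each formulation, using the Courant--Fischer characterization together with the decisive fact that every tree's own Fiedler vector belongs to $\mathcal{V}_F$ and attains the Rayleigh-quotient minimum. The differences are only presentational --- the paper phrases this as a bi-level max--min reformulation, while you verify the two per-tree value functions directly (including the spectral analysis of $\mathbf{W}$, which the paper defers to cited prior work).
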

\begin{proof}
Based on the variational characterization of eigenvalues of a real symmetric matrix, $\mathbf{L(\hat{x})}$, we know the following is true (from Courant-Fischer theorem \cite{parlett1998symmetric}):
{\color{black}
\begin{equation}    
    \lambda_2(\mathbf{L(\hat{x})}) = \min_{\mathbf{v} \in \mathbb{R}^n} \ \{\mathbf{v\cdot L(\hat{x})v}: \ \lVert \mathbf{v} \rVert_2  = 1, \textbf{1}\cdot \mathbf{v} = 0\} 
    \label{eq:lambda2}
\end{equation}
}%
where $\mathbf{\hat{x}}$ is any spanning tree in $\mathcal{T}$. Note that the optimal solution of \eqref{eq:lambda2}, say $\mathbf{\hat{v}}$, corresponds to the Fiedler vector of $\mathbf{L(\hat{x})}$. Using this characterization, one can also exactly reformulate problem $\mathcal{F}_1$ into the following bi-level nonlinear problem, where the outer-level maximizes over all possible spanning trees ($\mathbf{x}$), the $\lambda_2(\mathbf{L(x)})$, or the minimum value of the problem in \eqref{eq:lambda2}: 
\begin{subequations}
    \begin{align*}
        \gamma^* & = \lambda_2(\mathbf{L(x^*)}) \\ & = \max_{\mathbf{x} \in \mathcal{T}} \min_{\mathbf{v} \in \mathbb{R}^n} \ \{\mathbf{v\cdot L(x)v}: \ \lVert \mathbf{v} \rVert_2  = 1, \textbf{1}\cdot \mathbf{v} = 0\}.
    \end{align*}
\end{subequations}
However, for a given $\mathbf{x} \in \mathcal{T}$, since the inner minimization problem's global optimal solution is indeed the Fiedler vector (from \eqref{eq:lambda2}), one can further re-formulate the above bi-level problem into a single-level problem as 
\begin{subequations}
    \begin{align*}
        \gamma^* = \lambda_2(\mathbf{L(x^*)}) & = \max_{\mathbf{x} \in \mathcal{T}} \ \{\gamma : \mathbf{v\cdot L(x)v} \geqslant \gamma, \  \forall \mathbf{v} \in \mathcal{V}_F \}.
    \end{align*}
\end{subequations}
Since it is easy to observe that constraint \eqref{eq:vWv} reduces to $\mathbf{v\cdot L(x)v} \geqslant \gamma$ when $\mathbf{v} \in \mathcal{V}_F$, one can prove that $\gamma^* = \gamma^*_2$ and $\mathbf{x}^* = \mathbf{x}_2^*$. 
\end{proof}

In a complete graph, the size of $|{\mathcal V}_F|$ being an exponential number of spanning tree networks can lead to computational intractability of finding the optimal solution for ${\mathcal F}_2$; for this reason, we only consider a subset ${\mathcal V}_R \subset {\mathcal V}_F$ of Fiedler vectors and relax the semi-definite constraint of ${\mathcal F}_1$. 
However, the resulting MILP is an OA for ${\mathcal F}_1$. An optimal solution, $(\mathbf{x}_s, \gamma_s)$, for the OA may not be feasible for the formulation ${\mathcal F}_1$; in such a case, the matrix $\mathbf{\widehat{W}} := \mathbf{L(x}_s) - \gamma_s (\mathbf{I}_n - \mathbf{e}_0 \otimes \mathbf{e}_0)$ is not positive semi-definite, i.e., \textcolor{black}{at least one eigenvalue of $\mathbf{\widehat{W}}$ is negative.}  Consequently, one can find a Fiedler vector $\mathbf{\bar{v}} \in {\mathcal V}_F$ (corresponding to $\mathbf{L(x_s)}$)
such that the constraint $$\mathbf{\bar{v} \cdot W \bar{v}} \ge 0 $$
is violated at $\mathbf{x} = \mathbf{x}_s$. While true for the optimal solution, this valid inequality or cut is violated by the optimal solution for the relaxed problem and is referred to as an ``eigenvector'' cut. The set ${\mathcal V}_R$ is updated by augmenting $\mathbf{\bar{v}}$ to ${\mathcal V}_R$. By doing so, one can iteratively refine the OA eventually leading to an optimal solution.

\section{Upper Bounds on Optimal {\Ac} of Networks}
\label{Sec:UB}
The primary drawback of the cutting planes discussed in Section \ref{subsec:eigencuts} is their \textit{dense} nature in the variables of the $\mathbf{W}$ matrix. 
This often results in slow convergence or, at worst, stalling at larger upper bounds. One approach to alleviate this issue is by leveraging structured sparsity 
in the network, such as replacing the dense PSD constraint \eqref{eq:W_psd} with the requirement that smaller principal sub-matrices are PSD, 
as seen in \cite{fukuda2001exploiting}. Another recent approach involves adding $k$-sparse cuts for a PSD constraint by enforcing a target sparsity on the added 
cut \cite{dey2022cutting,bhela2021efficient}. \textcolor{black}{However, these methods are not directly applicable to the problem addressed in this paper, as identifying a spanning 
tree from a complete graph lacks inherent sparsity structure.} Authors in \cite{hijazi2016polynomial,gopinath2020proving} propose a nonlinear polynomial 
representation of non-negative principal minors, termed as the ``determinant hierarchy'', while such a method can be cumbersome to derive and implement 
for higher-order principal minors.

In this section, we instead propose a hierarchy of upper bounding MILP formulations for the MISDP problem ${\mathcal F}_1$, with feasible sets containing those of ${\mathcal F}_1$. The key idea is to relax the requirement that $\mathbf{W}$ be a PSD matrix, instead requiring only a subset of its smaller principal sub-matrices to be PSD or, equivalently, their corresponding minors to be non-negative
\cite{blekherman2020sparse}. We enforce this requirement by adding eigenvector-based cuts only on those principal sub-matrices via an OA procedure.

\subsection{Principal minor characterization of PSD matrices}
Below are the fundamental definitions and propositions essential for characterizing a PSD matrix. \cite{prussing1986principal}.
\begin{definition}
\label{def1}
Given a real symmetric matrix $\mathbf{W} \in \mathbbm{R}^{n\times n}$, a minor of $\mathbf{W}$ is the determinant of a sub-matrix obtained by choosing only some rows $J_1 \subseteq \{1,\ldots,n\}$ and some columns $J_2 \subseteq \{1,\ldots,n\}$ of $\mathbf{W}$. A principal minor is the determinant of a principal sub-matrix, $[\mathbf{W}]_J$, obtained by choosing the same rows and columns of $\mathbf{W}$, i.e., $J = J_1 = J_2$.
\end{definition}
\begin{definition}
\textcolor{black}{Let $J_m = \{ J \subseteq \{1,\ldots,n\} : |J| = m\}$ represent the set of all subsets of $\{1,\ldots,n\}$ of size $m$.} Then, for a given $\mathbf{W}$ matrix, \textcolor{black}{$[\mathbf{W}]_{J_m}$} represents the set of all principal sub-matrices of size $m \times m$.
\end{definition}
\begin{proposition}
\label{Prop:1}
$\mathbf{W} \in \mathbbm{R}^{n\times n}$ is a PSD matrix if and only if all its $(2^n-1)$ principal sub-matrices are PSD or the associated principal minors are non-negative, that is, $\forall m=1,\ldots ,n$,
\begin{align*}
   & {\mathbf{W} \succeq 0 \ \Longleftrightarrow \ \mathbf{\widehat{W}} \succeq 0,  \ \forall \mathbf{\widehat{W}} \in [\mathbf{W}]_{J_m},} \\ 
   & \hspace{1.3cm} {\Longleftrightarrow \ det(\mathbf{\widehat{W}}) \geqslant 0, \ \forall \mathbf{\widehat{W}} \in [\mathbf{W}]_{J_m}.}
\end{align*}
\end{proposition}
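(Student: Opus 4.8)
The plan is to prove the two-way equivalence of Proposition~\ref{Prop:1} by establishing a chain of three biconditionals. The claim relates three statements: (a) $\mathbf{W} \succeq 0$; (b) every principal sub-matrix $\mathbf{\widehat{W}} \in [\mathbf{W}]_{J_m}$ for all $m = 1,\ldots,n$ is PSD; and (c) every principal minor $\det(\mathbf{\widehat{W}}) \geqslant 0$. I would organize the argument as $(a) \Leftrightarrow (b)$ followed by $(b) \Leftrightarrow (c)$, where the latter is applied at the level of each individual sub-matrix.

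First I would prove $(a) \Rightarrow (b)$, which is the easy direction. Using the semi-infinite characterization $\mathbf{W} \succeq 0 \Longleftrightarrow \mathbf{v \cdot W v} \geqslant 0$ for all $\mathbf{v} \in \mathbb{R}^n$ (already recalled in the excerpt), I fix any index set $J \subseteq \{1,\ldots,n\}$ and any vector $\mathbf{u} \in \mathbb{R}^{|J|}$. I then embed $\mathbf{u}$ into $\mathbb{R}^n$ as a vector $\mathbf{v}$ that is zero outside the coordinates indexed by $J$; the quadratic form satisfies $\mathbf{u} \cdot [\mathbf{W}]_J \, \mathbf{u} = \mathbf{v} \cdot \mathbf{W} \mathbf{v} \geqslant 0$, so $[\mathbf{W}]_J \succeq 0$. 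Since $J$ was arbitrary, every principal sub-matrix is PSD. The reverse direction $(b) \Rightarrow (a)$ is immediate, since the full matrix $\mathbf{W}$ is itself the principal sub-matrix indexed by $J = \{1,\ldots,n\}$ (the case $m=n$), so requiring all principal sub-matrices to be PSD in particular forces $\mathbf{W} \succeq 0$.

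Next I would establish $(b) \Leftrightarrow (c)$ by proving, for a single symmetric matrix $\mathbf{\widehat{W}}$, that $\mathbf{\widehat{W}} \succeq 0$ if and only if all of its own principal minors are non-negative. The forward implication follows because a PSD matrix has only non-negative eigenvalues, and every principal sub-matrix of a PSD matrix is itself PSD (by the embedding argument above), hence its determinant, being the product of its eigenvalues, is non-negative. The converse---that non-negativity of all principal minors implies PSD---is the standard but more delicate characterization; I would invoke it via the reference \cite{prussing1986principal}, whose result is precisely that a symmetric matrix is PSD if and only if all its principal minors (not merely the leading ones, as in the positive-definite Sylvester criterion) are non-negative. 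Applying this equivalence to each $\mathbf{\widehat{W}} \in [\mathbf{W}]_{J_m}$ completes the chain.

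The main obstacle is the $(c) \Rightarrow (b)$ direction at the sub-matrix level: non-negativity of \emph{all} principal minors is genuinely required, and restricting to leading principal minors would be insufficient in the semi-definite (as opposed to strictly definite) case---a standard counterexample such as $\operatorname{diag}(0,-1)$ has a non-negative leading principal minor of order one yet is not PSD. I would therefore emphasize that the quantifier ranges over \emph{all} $2^n - 1$ non-empty principal sub-matrices, and lean on the cited principal-minor test rather than re-deriving it, since the correctness of the proposition hinges on this precise form of the criterion.
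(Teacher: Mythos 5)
Your proposal is correct and matches the paper's treatment: the paper states this proposition without proof, citing \cite{prussing1986principal} for the principal-minor test, and your argument reduces everything to exactly that cited result, which is the only non-trivial ingredient. The additional structure you supply --- the zero-padding embedding showing principal sub-matrices of a PSD matrix are PSD, the product-of-eigenvalues argument for non-negativity of minors, the observation that principal minors of a sub-matrix $\mathbf{\widehat{W}}$ are themselves principal minors of $\mathbf{W}$, and the $\operatorname{diag}(0,-1)$ caveat explaining why \emph{all} (not merely leading) principal minors are needed in the semi-definite case --- is routine and sound.
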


\subsection{Principal minor-based relaxation formulations}
Utilizing proposition (\ref{Prop:1}), one can construct an MILP relaxation of the MISDP formulation $\mathcal{F}_1$ by considering the PSD-ness of principal sub-matrices, $\color{black}{[\mathbf{W}]_{J_m}}$, of a certain size $m$ ($\leqslant n$).  We employ this characterization through an {\oa} (OA) procedure rather than relying on principal minor characterization to avoid dealing with cumbersome nonlinear polynomial constraints arising from sub-matrix determinants. Enforcing PSD-ness on smaller $m\times m$ principal sub-matrices using OA results in sparser linear constraints, a crucial property leveraged by state-of-the-art MILP solvers to significantly enhance performance.

Algorithm \ref{alg:ub_oa} iteratively refines an OA of the MISDP in ${\mathcal F}_1$, incorporating eigenvectors of principal sub-matrices of size $m$ as cutting planes. The algorithm supports the general case where OA cutting planes can be added on all principal sub-matrices of sizes $m \in \mathcal{M}$. {\color{black} It solves a sequence of MILPs $\mathcal{F}^{u}_{m}$ derived by dropping the PSD constraint \eqref{eq:W_psd} in $\mathcal{F}_1$}. The algorithm yields either: \textbf{(a)} an $\varepsilon_{opt}$-optimal solution to the original MISDP (${\mathcal F}_1$), where upper bound ($UB$) and lower bound ($LB$) are within the prescribed relative optimality tolerance, or \textbf{(b)} a valid $UB$ to ${\mathcal F}_1$. Case \textbf{(a)} holds true when $\max(\mathcal{M})$, given by {\color{black} $\{ i \in \mathcal{M} : i > j \ \forall j \in (\mathcal{M}\setminus \{i\}) \}$,} is equal to $n$, implying the PSD-ness of the $\mathbf{W}$ matrix. However, case \textbf{(b)} applies when $\max(\mathcal{M}) < n$, and the optimal solution to $\mathcal{F}^{u}_{m}$ is a valid upper bound as it's feasible set is an OA of the PSD constraint (see proposition \ref{Prop:1}). To ensure PSD-ness of all smaller principal sub-matrices in case \textbf{(b)}, eigenvector-based cuts are added for all non-PSD sub-matrices in step 9 of Algorithm \ref{alg:ub_oa}. \textcolor{black}{In the initialization (step 2), $UB$ is set to $\gamma^u_{sdp}$, obtained by relaxing the $\mathcal{F}_1$'s binary variables and solving the continuous semi-definite program (SDP) to optimality.}

\begin{algorithm}[ht]
\caption{Outer-approximation algorithm via principal minor characterization}
\label{alg:ub_oa}
{
\begin{algorithmic}[1]

\State \textbf{Input}: Graph $(V,E,\mathbf{w})$, A set of sizes of principal sub-matrices $\mathcal{M}$, PSD tolerance $\varepsilon_{psd} > 0$, Relative optimality tolerance $\varepsilon_{opt} > 0$.

\vspace*{0.05in}
\State \textbf{Initialization}: $LB = 0$, $\textcolor{black}{\gamma^u_{m} = \gamma^u_{sdp}}$, $\mathcal{F}^{u}_{m} =$ MILP by dropping the PSD constraint \eqref{eq:W_psd} in $\mathcal{F}_1$. 

\vspace*{0.05in}
\State Solve $\mathcal{F}^{u}_{m}$. Let the optimal solution be $({\gamma^u}^*, \mathbf{W}^*, \mathbf{x}^*)$. 

\vspace*{0.05in}
\State $LB \longleftarrow \lambda_2(\mathbf{L(x^*)})$

\vspace*{0.05in}
\State $\gamma^u_{m} \longleftarrow {\gamma^u}^*$

\For{$m \in \mathcal{M}$}

\vspace*{0.05in}
\State $\textcolor{black}{[\mathbf{W}]^-_m} \longleftarrow \left\{\mathbf{\widehat{W}} \in \textcolor{black}{[\mathbf{W}]_{J_m}: det(\widehat{\mathbf{W}}^{\ast})} \leqslant -\varepsilon_{psd} \right\}$, \textcolor{black}{where $\widehat{\mathbf{W}}^{\ast}$ is the principal sub-matrix of $\mathbf{W}^*$ with identical indices as $\widehat{\mathbf{W}}$.}

\vspace*{0.05in}
\While{$\textcolor{black}{[\mathbf{W}]^-_m} \neq \emptyset$ \textbf{or} $\frac{\gamma^u_{m}-LB}{\gamma^u_{m} + 10^{-6}} > \varepsilon_{opt}$}

\vspace*{0.05in}
\State Update $\mathcal{F}^{u}_{m}$ with the following linear constraints: 
$$\mathbf{v^- \cdot \widehat{W} v^-} \geqslant 0 \quad \forall  \mathbf{\widehat{W}} \in \textcolor{black}{[\mathbf{W}]^-_m},$$
where $\mathbf{v^-}$ is the violated eigenvector corresponding to the smallest (negative) eigenvalue of $\mathbf{\widehat{W}}^*$.

\vspace*{0.05in}
\State Solve $\mathcal{F}^{u}_{m}$. Update $LB$, $\gamma^u_{m}$ and solutions $\mathbf{W}^*$, $\mathbf{x}^*$.

\vspace*{0.05in}
\State Update $\textcolor{black}{[\mathbf{W}]^-_m}$ based on the updated $\mathbf{W}^*$.
\EndWhile
\EndFor

\vspace*{0.05in}
\State Return $\mathbf{x}^*$, lower bound $LB$ and upper bound $\gamma^u_{m}$.

\end{algorithmic}
}
\end{algorithm}

\begin{remark}
In cases where $|\mathcal{M}| > 1$, as input to Algorithm \ref{alg:ub_oa}, the PSD-ness of smaller sub-matrices is implied by the PSD-ness of sub-matrices of size $\max(\mathcal{M})$ (according to Proposition \ref{Prop:1}). Thus, theoretically, enforcing the former requirement may not be necessary. However, in practice, we observe that sparse eigenvector cuts corresponding to smaller sub-matrices, combined with denser cuts of larger sub-matrices, significantly improve the performance of MILP solvers \cite{gurobi}. An example is illustrated when $\mathcal{M} = {2,n}$, as demonstrated in Section \ref{Sec:CR}.
\end{remark}

\begin{remark}
The cutting plane-based OA method, in Algorithm \ref{alg:ub_oa}, ensures convergence, relying on compactness arguments to establish the boundedness of the feasible region of $\mathcal{F}^{u}_{m}$. Specifically, $trace(\mathbf{W})$ can be worst-case upper bounded by a finite value $\left(\sum_{\mathbf{v} \in V} \sum_{\{i,j\} \in \delta(\mathbf{v})} w_{ij} - \hat{\gamma}(n-1) \right)$, where $\hat{\gamma}$ is any connected network's \ac. Thus, the algorithm's convergence is finite (see \cite{lubin2017mixed} for details). 
\end{remark}

\begin{remark}
    In Algorithm \ref{alg:ub_oa}, we have excluded the discussion on topology cuts (from $\mathcal{F}_0$) for constraints \eqref{eq:xinT}. Similar to eigenvector cuts, we incorporate them within this algorithm in a cutting plane fashion only when solution $\mathbf{x}^*$ forms a disconnected network (see \cite{nagarajan2015synthesizing} for further details). 
\end{remark}

\noindent
\textbf{MISOCP Relaxation:} In the case when $\mathcal{M} = \{2\}$ (i.e., $m=2$), instead of enforcing PSD-ness of $2 \times 2$ principal sub-matrices, one can also simply enforce this requirement via the non-negativity of all the corresponding principal minors, i.e., $det(\mathbf{\widehat{W}}) \geqslant 0 \ \forall \mathbf{\widehat{W}} \in {\color{black}[\mathbf{W}]_{J_2}}$. Therefore, using this characterization, the relaxation formulation, $\mathcal{F}^{u}_{2}$, can also be solved as a mixed-integer second-order conic program (MISOCP) by dropping constraint \eqref{eq:W_psd} from $\mathcal{F}_{1}$ and replacing it with following valid second-order conic (SOC) constraints: 
\begin{equation}
\left \lVert \left(
		\begin{array}{c}
        2W_{ij}\\
        W_{ii} - W_{jj}
		\end{array}
  \right)
  \right \rVert_2 \leqslant W_{ii} + W_{jj}, \quad \forall  \{i,j\} \in E.
\label{eq:conic_con}
\end{equation}

Authors in \cite{bertsimas2020polyhedral} observe that the inclusion of SOC constraints of type \eqref{eq:conic_con}, further in an OA form, perform very well in practice, although in the relaxation of continuous SDPs. Hence, for relaxing the MISDP ($\mathcal{F}_1$), we incorporate these SOC constraints based on an OA procedure within the branch-and-cut framework. For this procedure, cuts employed are based on the following lemma.

\begin{lemma}
Since constraints \eqref{eq:conic_con} reduce to the form $W_{ij}^2 \leqslant W_{ii}W_{jj}$, let $f(W_{ij},W_{ii}) = \frac{(W_{ij})^{2}}{W_{ii}}$. Then, for every $\{i,j\} \in E$, constraint \eqref{eq:conic_con} is satisfied ``if and only if'' the following infinite set of linear cuts holds $\forall  W^o_{ii} \in [\underline{W}_{ii}, \overline{W}_{ii}], W^o_{ij} \in [\underline{W}_{ij}, \overline{W}_{ij}] $ \cite{kelley1960cutting}:
\begin{equation}
    \begin{array}{l}
     f(W^o_{ij},W^o_{ii}) + \frac{df(W^o_{ij},W^o_{ii})}{dW_{ij}}(W_{ij} - W^o_{ij}) + \\ \hspace{8em} \frac{df(W^o_{ij},W^o_{ii})}{dW_{ii}}(W_{ii} - W^o_{ii})  \leqslant W_{jj}.
    \end{array}
    \label{eq:OA_minor}
\end{equation}
\end{lemma}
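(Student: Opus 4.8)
The plan is to recognize the claim as the classical Kelley outer-approximation statement for a differentiable convex function, applied to the quadratic-over-linear map $f(W_{ij},W_{ii}) = W_{ij}^2/W_{ii}$. First I would record that \eqref{eq:conic_con} is equivalent to the rotated-cone inequality $W_{ij}^2 \leqslant W_{ii}W_{jj}$, and that on the admissible domain the diagonal entry $W_{ii}$ is strictly positive: a PSD $2\times2$ principal submatrix forces $W_{ii}\geqslant 0$, while $W_{ii}=0$ degenerates the constraint to $W_{ij}=0$. Dividing through by $W_{ii}>0$ then recasts \eqref{eq:conic_con} into the scalar epigraph form $f(W_{ij},W_{ii}) \leqslant W_{jj}$.

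The central fact to establish is that $f$ is jointly convex in $(W_{ij},W_{ii})$ on $\{W_{ii}>0\}$. I would verify this directly from its Hessian, whose determinant vanishes and whose trace is positive for $W_{ii}>0$, so it is PSD; this is the standard perspective (quadratic-over-linear) argument. Writing $p = (W_{ij},W_{ii})$ and $p^o = (W^o_{ij},W^o_{ii})$, convexity yields the gradient inequality
\[
f(p^o) + \nabla f(p^o)\cdot(p - p^o) \;\leqslant\; f(p),
\]
valid for all $p, p^o$ in the domain, where $\nabla f = \big(2W_{ij}/W_{ii},\, -W_{ij}^2/W_{ii}^2\big)$. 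Crucially, the left-hand side here is exactly the left-hand side of cut \eqref{eq:OA_minor} evaluated at the linearization point $p^o$, since those two partials are precisely the coefficients appearing in \eqref{eq:OA_minor}.

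The ``only if'' direction then follows by chaining: assuming $f(p)\leqslant W_{jj}$, every cut's left-hand side is at most $f(p)$, hence at most $W_{jj}$, so the whole infinite family \eqref{eq:OA_minor} holds simultaneously. For the ``if'' direction I would specialize the family to the linearization point $p^o = p$, which is admissible because the variable bounds guarantee $p \in [\underline W_{ij},\overline W_{ij}]\times[\underline W_{ii},\overline W_{ii}]$; at that point the gradient term vanishes and \eqref{eq:OA_minor} collapses exactly to $f(p)\leqslant W_{jj}$, i.e. \eqref{eq:conic_con}. Equivalently, convexity with differentiability gives $f(p)=\sup_{p^o}\{f(p^o)+\nabla f(p^o)\cdot(p-p^o)\}$, with the supremum attained at $p^o=p$, so requiring $f(p)\leqslant W_{jj}$ and requiring every linearization to lie below $W_{jj}$ are the same condition.

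The main obstacle is not the convex-analytic core, which is routine, but the domain bookkeeping that makes the equivalence tight: one must justify $W_{ii}>0$ so that $f$ and its gradient are well defined and $f$ is convex, and must ensure the evaluation point lies in the quantified box so that the ``if'' direction's special choice $p^o=p$ is legitimate. The degenerate case $W_{ii}=0$ (and, symmetrically, the feasibility requirement $W_{jj}\geqslant 0$) is the only place demanding care beyond the textbook tangent-plane argument.
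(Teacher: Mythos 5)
Your proof is correct and takes essentially the same route the paper relies on: the paper offers no proof of this lemma at all, delegating it entirely to the citation of Kelley's cutting-plane method, and your argument (joint convexity of the quadratic-over-linear function $f(W_{ij},W_{ii}) = W_{ij}^2/W_{ii}$ via its rank-one PSD Hessian, the gradient inequality for the ``only if'' direction, and the choice $p^o = p$ to recover $f(p) \leqslant W_{jj}$ for the ``if'' direction) is precisely the standard argument behind that citation. Your explicit handling of the domain issues --- that $W_{ii} > 0$ is needed for $f$ and its gradient to be well defined, and that the degenerate case $W_{ii}=0$ forces $W_{ij}=0$ --- is a point of care the paper glosses over entirely.
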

Although the above cuts provide a semi-infinite representation of the SOC constraint, they can  efficiently integrate into the cutting plane framework of Algorithm \ref{alg:ub_oa} when $\mathcal{M} = \{2\}$. Let $\textcolor{black}{[\mathbf{W}]^-_2}$ be the set of all violated $2\times 2$ sub-matrices (in Step 7). To eliminate these SOC infeasible minors evaluated at the solution $\mathbf{W}^*$, upon simplification of constraints \eqref{eq:OA_minor}, the linear cuts added in step 9 of Algorithm \ref{alg:ub_oa} will be of the form: 
\begin{equation}
{\frac{{W^*_{ij}}}{{(W^*_{ii})^2}}} (2 \ W^*_{ii} \ \widehat{W}_{ij} - W^*_{ij} \ \widehat{W}_{ii}) \leqslant \widehat{W}_{jj}, \ \forall  \mathbf{\widehat{W}} \in \textcolor{black}{[\mathbf{W}]^-_2}
     \label{eq:OA_cut}
\end{equation}

The performance of the aforementioned upper bounding formulations is discussed in Section \ref{Sec:CR}.

\section{Degree-constrained Lower Bounding Formulation}
\label{Sec:L}
In $\mathcal{F}_1$, the exponential number of feasible spanning tree networks ($n^{n-2}$) exacerbates the inherent difficulty of solving the MISDP, especially for larger instances. 
To address this challenge, we introduce a simpler MISDP, exploiting the degree of nodes in optimal networks. This formulation's optimal solution serves as a tight lower bound for $\gamma^*$ in $\mathcal{F}_1$. 

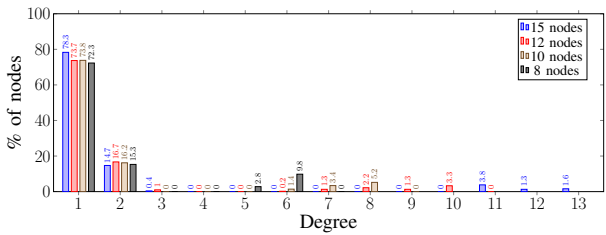
\begin{figure}[ht]
    \centering
    \resizebox{0.9\columnwidth}{!}{%
\begin{tikzpicture}
\begin{axis}[
    ybar=3.25pt,
    bar width=7pt,
    x=1.75cm,
    ymin=0,
    ymax=100,
    xtick=data,
    enlarge x limits= 0.05,
    legend pos=north east,
    legend style={font=\LARGE},
    xlabel={Degree},
    ylabel={\% of nodes},
    nodes near coords,
    every node near coord/.append style={rotate=90, anchor=west},
    label style={font=\Huge},
    tick label style={font=\huge}  
    ]
\addplot coordinates {( 1,78.3	) ( 2,14.7	) ( 3,0.4) ( 4,0) ( 5,0	) ( 6,0	) ( 7,0	) ( 8,0) ( 9,0) ( 10,0) (11, 3.8) (12,1.3) (13,1.6) };
\addplot  coordinates {( 1,73.7) ( 2,16.7) ( 3,1.0) ( 4,0) ( 5,0	) ( 6,0.2) ( 7,1.3	) ( 8,2.2) ( 9,1.3	) ( 10,3.3) (11, 0) };
\addplot coordinates {( 1,73.8	) ( 2,16.2	) ( 3,0	) ( 4,0	) ( 5,0	) ( 6,1.4	) ( 7,3.4	) ( 8,5.2	) ( 9,0	) };
\addplot  coordinates {( 1,72.3) ( 2, 15.3) ( 3,0	) ( 4,0	) ( 5,2.8) ( 6,9.8) (7,0)};

\legend{15 nodes, 12 nodes, 10 nodes, 8 nodes}
\end{axis}
\end{tikzpicture}%
}
\caption{Degree histogram of optimal/best-known solutions for instances with 8, 10, 12 and 15 nodes, averaged over fifty instances of each size. The y-axis denotes the percentage of nodes with a specific degree in an instance.}
\label{Fig:Degree_histogram}
\end{figure}

Fig. \ref{Fig:Degree_histogram} represents the degree histograms for various-sized spanning trees with respect to optimal (8, 10 nodes) and best-known solutions (12, 15 nodes) of $\mathcal{F}_1$. Although this plot does not capture all aspects of robust spanning trees, one can empirically infer insights into their network structures. Optimal trees tend to have very few nodes with significantly higher degrees, resulting in a clustered structure with a low network diameter. Here on, we refer to the node with the highest unweighted degree as the ``central node". For instances with 8, 10, 12 and 15 nodes, the central node's degree lies in the sets \{5,6\}, \{6,7,8\}, \{6,7,8,9,10\},  and \{11,12,13\}, respectively. 

Based on these observations, we now formulate a degree-constrained lower bounding formulation (DCLBF), ${\mathcal F}_{k}^{l}$. In ${\mathcal F}_{k}^{l}$, $1 \leqslant k \leqslant (n-1)$ is a degree-bounding parameter for the central node of all the feasible spanning trees. Since we allow for only one central node with a degree of at least $(n-k)$, we introduce a vector of binary variables, $\mathbf{y}$, whose $i^{th}$ component, $y_i$, is $1$ if the $i^{th}$ node is chosen to be the central node and is $0$ otherwise. Correspondingly, we have the following  formulation:
\begin{subequations}
\begin{align}
    \text{(${\mathcal F}_{k}^{l}$)} \quad {\gamma}^{l^*}_k = & \ \max  \ \gamma, \\  \text{s.t.} \ & \  \text{Constraints} \ \eqref{eq:W_psd}, \eqref{eq:W_ii}, \eqref{eq:W_ij}, \eqref{eq:xinT}, \\
    & {\color{black} \sum_{j \in V\setminus \{i\}}^{n} x_{ij} \geqslant y_{i}(n-k-1)+1,  \forall i \in V,} \label{eq:deg}\\
    &  \sum_{i \in V} y_i = 1, \ \mathbf{y} \in \{0,1\}^{n}.
\end{align}
\label{eq:F_{d_k}}
\end{subequations}
The MISDP formulation ${\mathcal F}_{k}^{l}$ in \eqref{eq:F_{d_k}} can be solved to optimality using the eigenvector-based cutting plane framework in Algorithm \ref{alg:ub_oa}. The only difference would be in step 2, where $\mathcal{F}_1$ will be replaced by ${\mathcal F}_{k}^{l}$ for a certain input value of $k$. 
\begin{remark}
\label{remark:dclbf_bounds}
    In formulation \eqref{eq:F_{d_k}}, increasing the parameter $k$ reduces the lower bound on the central node degree but increases the feasible space, thus raising problem complexity. Further, the optimal objective, ${\gamma}^{l^*}_k$, monotonically increases with the degree of the central node, i.e., ${\gamma}^{l^*}_1  \leqslant {\gamma}^{l^*}_2 \leqslant \ldots \leqslant {\gamma}^{l^*}_{n-1} = \gamma^*$, where $\gamma^*$ is the optimal objective of $\mathcal{F}_1$ in \eqref{eq:F_1}.
\end{remark}

\begin{remark}
    In cooperative vehicle localization with noisy measurements (from Section \ref{subsec:coop_vehicle}), to minimize the state estimation error, it is reasonable to assume that most vehicles will be connected to the vehicle with the absolute position measurement available, acting as a central node. Thus, the underlying communication network can be assumed to resemble the optimal solution of ${\mathcal F}_{k}^{l}$ at lower values of $k$.
\end{remark}

\section{Maximum Cost Heuristic}
\label{Sec:H}
This section presents a maximum cost heuristic (MCH) aimed at efficiently obtaining high-quality feasible solutions. This heuristic exploits the empirical trend in topological structures of optimal networks of ${\mathcal F}_k^l$ in \eqref{eq:F_{d_k}}:
\begin{itemize}[leftmargin=*]
    \item {\bf Observation about central node:} Given a fixed value of $k$, in most instances, the sum of weights of the $(n-k)$ edges incident on the central node exceeds the corresponding value of any other node. Based on this metric, this observation allows us to  form the priority order, $\mathbf{O}^*_{cn}$. Analysis of Table \ref{Tab:CN_data}, suggests that the central node of the optimal network of ${\mathcal F}_k^l$ often ranks within the top $h_1$ elements of $\mathbf{O}^*_{cn}$, where $h_1$ denotes the number of available choices for a central node from the priority order.
    \item {\bf Observation about the edges connecting leaf nodes:} Often, the edges found in the optimal network of   ${\mathcal F}_k^l$ correspond to those with a higher value of $w_{ij}\cdot(v_i-v_j)^2$, where $v_i$ is the $i^{th}$ component of the Fiedler vector of the star graph with the central node chosen based on the first observation, and the edge $(i,j)$ connects the leaf node $j$ with a node $i$ that is connected to the central node. We assign a ranking to the edges of each leaf node accordingly in the priority order, $\mathbf{O}^*_{le}$, where higher edge scores indicate better ranking. Analysis from Table \ref{Tab:CN_data} suggests that these edges often occupy the top $h_2$ elements of $\mathbf{O}^*_{le}$, where $h_2$ denotes the number of available choices for connecting edges from $\mathbf{O}^*_{le}$ to each leaf node.
\end{itemize}

\begin{table}[ht]
\centering
    \caption{Percentage of 50 instances where the central node and the edges connecting leaf nodes of the optimal network of ${\mathcal F}_k^l$ lie in the first $h_1$ elements of $\mathbf{O}^*_{cn}$ and the first $h_2$ elements of $\mathbf{O}^*_{le}$, respectively, for varying $n$, $k$, $h_1$ and $h_2$ values.}
    \label{Tab:CN_data}
    \resizebox{0.9\columnwidth}{!}{%
    \begin{tabular}{p{0.5cm} p{0.5cm} | p{0.75cm} p{0.75cm} p{0.75cm}| p{0.75cm} p{0.75cm} p{0.75cm}} 
    \Xhline{2\arrayrulewidth} \\ [-1.2em]
   $n$ & $k$ &  & $h_1$ &  &  & $h_2 $ & \\ [0.5ex]
    \hline \\ [-1.2em]
     &  & \ 3 & \ 5 & \ 7 & \ 3 & \ 5 & \ 7\\
    \hline \\ [-1.2em]
    8 & 3 & 0.94 & 0.98 & 0.98 & 0.96 & 1.00 & -- \\
    10 & 4 & 0.80 & 0.90 & 0.98 & 0.96 & 1.00 & --\\
    12 & 5 & 0.76 & 0.88 & 0.96 & 0.97  & 0.99 & 1.00\\
    15 & 4 & 0.60 & 0.72 & 0.84 & 0.91 & 0.99 & 1.00\\ [0.5ex]
    \Xhline{2\arrayrulewidth}
    \end{tabular}%
    }
\end{table}

Algorithm \ref{alg:H} outlines the process to generate the ranking orders of nodes and edges based on the observations mentioned. The first part of the Algorithm \ref{alg:H} (\textit{lines 3:8}) gives the ranking of nodes ($\mathbf{O}^*_{cn}$) for them to be considered a central node;  ranking is based on the sum of the weights of the heaviest $(n-k)$ edges incident on each node $i\in V$ in a complete graph. The latter part of the Algorithm \ref{alg:H} (\textit{lines 9:22}) provides the ranking of edges connecting the leaf nodes ($\mathbf{O}^*_{le}$).

\begin{algorithm}[ht]
\begin{algorithmic}[1]
\State \textbf{input}: $n,k,\mathbf{w},h_1$.

\vspace*{0.05in}
\State \textbf{initialization}: $\mathbf{O}^*_{cn} \gets \phi_{1 \times n}$, $\mathbf{O}^*_{le} \gets \phi_{k-1 \times n-1}$.

\vspace*{0.05in}
\For{$i \in V$}

\vspace*{0.05in}
\State $\mathbf{C}[i,:] \gets \text{sort}(\mathbf{w}[i,:])$ \Comment{Descending order}

\vspace*{0.05in}
\State Store indices of $\mathbf{C}[i,:]$ in $ \mathbf{C^*}$, such that $\mathbf{C}[i,:]$ is sorted in decreasing order.

\vspace*{0.05in}
\State $\mathbf{S}[i] \gets \text{sum}(\mathbf{C}[i,j] \ for \ \color{black}{j = 1,2,\ldots,n-k})$

\EndFor

\State Store indices of $\mathbf{S}$ in $ \mathbf{O}^*_{cn}$, such that $\mathbf{S}$ is sorted in decreasing order.

\vspace*{0.05in}
\For{\textcolor{black}{$i = 1,2,\ldots,h_1$}}

\vspace*{0.05in}   
\For{\textcolor{black}{$j = (n-k+1),\ldots,n$}}

\vspace*{0.05in}
\State Construct a star graph with $\mathbf{O}^*_{cn}[i]$ as central node.

\vspace*{0.05in}
\State Compute Fiedler vector $\mathbf{v}$ of star graph.

\vspace*{0.05in}
\For{\textcolor{black}{$l = 1,2,\ldots,n $}}

\vspace*{0.05in}
\If{\textcolor{black}{$l$ is not $\mathbf{O}^*_{cn}[i]$}}

\vspace*{0.05in}
\State $\mathbf{O_{le}}[i,j,l] \gets \mathbf{w}[j,l]*(\mathbf{v}[\mathbf{C^*}[j]] - \mathbf{v}[l])^2$

\vspace*{0.05in}
\State Store indices of $\mathbf{O_{le}}[i,j,:]$ in $\mathbf{O}^*_{le}[i,j]$, such that $\mathbf{O_{le}}[i,j,:]$ is sorted in decreasing order.

\EndIf
\EndFor
\EndFor
\EndFor

\vspace*{0.05in}
\State Return $\mathbf{O}^*_{cn}$ and $\mathbf{O}^*_{le}$.
\caption{Ranking algorithm for the weighted network}
\label{alg:H}
\end{algorithmic}
\end{algorithm}

For chosen values of $k$, $h_1$ and $h_2$, using the priority orders generated via Algorithm \ref{alg:H}, additional constraints are added to ${\mathcal F}_{k}^{l}$ (we will refer to it as ${\mathcal F}_{k}^{h}$) to limit the feasible choices for the central node and the edges connecting the leaf nodes. Solving the MISDP in ${\mathcal F}_{k}^{h}$ via cutting plane algorithm, as described in Algorithm \ref{alg:ub_oa}, by setting $\mathcal{M} = \{n\}$ results in a good quality feasible solution, whose {\ac} will be referred as $\gamma_h$. The quality of the $\gamma_h$ can be improved by increasing the $k$, $h_1$, and $h_2$, while on the other hand, the runtime also increases. The quality and scalability of the MCH are discussed in Section \ref{Sec:CR}.

\subsection{Maximum cost heuristic for the associated problem}
For the problem described in Section \ref{subsec:Associated}, the degree of the central node is upper bounded by $d$ in contrast to the DCLBF ${\mathcal F}_{k}^{l}$, where the degree is lower bounded by $(n-k)$ \eqref{eq:deg}. The MISDP formulation for these networks differs from ${\mathcal F}_{k}^{l}$ with respect to the following degree constraint:

\begin{equation}
     \sum\nolimits_{j=1}^{n} x_{ij} \leqslant y_{i}d,\ \forall \ i = 1,2,\ldots,n.
     \label{eq:deg_con_payload}
\end{equation}

Based on the MISDP formulation for the networks with payload constraints, the MCH has been modified accordingly. We generate the priority orders $\mathbf{O}^*_{cn}$ and $\mathbf{O}^*_{le}$ using $d$ instead of $(n-k)$. In the forthcoming section, we corroborated the performance of the modified MCH algorithm with the FSM algorithm from \cite{son2010building}.

\section{Computational Results}
\label{Sec:CR}
All optimization formulations and algorithms were implemented using JuMP v1.2.0 \cite{dunning2017jump} in Julia v1.7.3 programming language. The code is accessible via the open-source Julia package  ``\textsc{LaplacianOpt}"\footnote{\label{note:lopt} \url{https://github.com/harshangrjn/LaplacianOpt.jl}}. All computational results were computed with Gurobi 9.5.1 \cite{gurobi} as an MILP solver and Mosek 9.2.16 \cite{mosek_2020} as the convex SDP solver on a personal laptop with 2.9 GHz 6-Core Intel Core i9 processor and 16GB memory. User-defined cuts, such as the eigenvector and topology cuts (in Algorithm \ref{alg:ub_oa}), were implemented using Gurobi lazy-cut callback to separate integral solutions. \textcolor{black}{In this section, boldface is used in tables to emphasize either the best outcomes achieved by the proposed algorithm or the improved performance compared to existing methods.}

\subsection{Instance generation}
\textcolor{black}{The proposed algorithms were evaluated across instances ranging from eight to one hundred nodes. 
To ensure the non-triviality of the optimal networks, each instance was deliberately chosen where there is at least one feasible solution with {\ac} greater than that of star graphs and the maximum spanning tree.} All test instances are included in the ``\textsc{LaplacianOpt}"\footref{note:lopt} package.

\subsection{Performance of Algorithm \ref{alg:ub_oa}}
\label{subsec:UB_perf}
\subsubsection{\underline{Quality of upper bounds}} As discussed in Section \ref{Sec:UB}, \textcolor{black}{executing Algorithm \ref{alg:ub_oa}} with $\max(\mathcal{M}) < n$ guarantees an upper bound to $\mathcal{F}_1$. The performance of the proposed upper bounding algorithm is demonstrated for various instances of each problem size, utilizing the principal minors of sub-matrices with sizes $2\times 2, 3\times 3,$ and $4\times 4$ in Table \ref{Tab:UB}.

In Table \ref{Tab:UB}, the optimality gap is defined by $ (\frac{\gamma^u - \gamma^{l}}{\gamma^l} * 100),$ where $\gamma^u$ is the upper bound attained and $\gamma^l$ is equal to $\lambda_2(\mathbf{L(x^*}))$ of the optimal/best-known feasible solution $\mathbf{x^*}$.  The best-found $\gamma^l$ values are provided in Table \ref{tab:optimal_ac} in the Appendix. \textcolor{black}{$\gamma^u_{sdp}$ is obtained by relaxing binary variables within $\mathcal{F}_1$ and solving the resulting continuous SDP problem to optimality. Conversely, $\gamma^u_{m}$ is computed using Algorithm \ref{alg:ub_oa} for principal sub-matrices of size $m$.} For instances with eight and ten nodes, the best-known feasible solution corresponds to the optimal solution obtained by solving ${\mathcal F}_{1}$ using the cutting plane algorithm, with $\mathcal{M} = \{n\}$. For twelve node instances, the best-known feasible solution is obtained by solving ${\mathcal F}_{k}^{l}$ in \eqref{eq:F_{d_k}} with $k$ set to five.

\begin{table}[ht]
    \centering
    \caption{Comparing the optimality gaps between upper bounds and optimal/best-known feasible solutions for varying sizes of networks and principal sub-matrices.}
	\subtable[$n = 8$]{
    \centering
    \label{Tab:up_bnd8}
    \resizebox{0.75\columnwidth}{!}{%
    \begin{tabular}{l * {4}{c}} 
    \Xhline{2\arrayrulewidth} \\ [-0.75em]
   Instance & $\gamma_{sdp}^u$ gap (\%) & $\gamma^u_{2}$ gap (\%) & $\gamma^u_{3}$   gap (\%)  & $\gamma^u_{4}$  gap (\%)   \\ [0.5ex]
    \hline \\ [-0.5em]
    1 & 105.91 &59.11 &  {15.63} &  \textbf{0.01} \\
    2 & 132.15 &38.53 &  {18.06} &  {0.02}  \\
    3 & 130.00 &68.79 &  {39.52} &  {0.37}  \\
    4 & 127.93 &54.03 &  {16.90} &  {0.21}  \\
    5 & 118.82 &64.59 &  \textbf{0.50}  &  {0.14} \\
    6 & 130.66 &55.76 &  {8.06}  &  {0.87}  \\ 
    7 & 136.94 &58.35 &  {22.38} &  {0.36}  \\
    8 & 113.15 &49.45 &  {7.84}  &  {0.30}  \\
    9 & 126.67 &43.22 &  {20.60} &  {0.14}  \\
    10 & 106.41 &\textbf{38.33} &  {22.55} &  {3.90}  \\ [1ex]
    \Xhline{2\arrayrulewidth} \\ [-0.5em]
    Average & 122.82 &\textbf{53.01} &  \textbf{17.20} &  \textbf{0.63}  \\ [1ex]
    \Xhline{2\arrayrulewidth}
    \end{tabular}%
    }
    }
    \subtable[$n = 10$]{
    \centering
    \label{Tab:up_bnd10}
    \resizebox{0.75\columnwidth}{!}{%
    \begin{tabular}{l *{4}{c}} 
    \Xhline{2\arrayrulewidth} \\ [-0.75em]
   Instance & $\gamma_{sdp}^u$ gap (\%) & $\gamma^u_{2}$ gap (\%) & $\gamma^u_{3}$   gap (\%)  & $\gamma^u_{4}$  gap (\%)   \\ [0.5ex]
    \hline \\ [-0.5em]
    1 & 216.34 & 103.01  & {48.97} &  {18.87} \\
    2 & 170.24 &   83.87  & {36.60} &  {6.76} \\
    3 & 188.82 &   70.67  & {39.25} &  {6.45} \\
    4 & 146.74 &   54.41  & {15.31} &  \textbf{0.20} \\
    5 & 193.24 &   109.56  & {43.83} & {13.63} \\
    6 & 112.88 &   \textbf{46.03}  & \textbf{12.34} &  {3.49} \\ 
    7 & 213.73 &   85.69  & {45.59} &  {19.22} \\
    8 & 168.55 &   66.84  & {29.73} &  {0.90} \\
    9 & 170.00 &   73.23  & {23.96} &  {7.06} \\
   10 & 204.16 &  70.51  & {35.72} & {28.10} \\ [1ex]
   \Xhline{2\arrayrulewidth} \\ [-0.5em]
    Average & 178.47 & \textbf{76.38} &  \textbf{33.13} &  \textbf{10.47}\\ [1ex]
    \Xhline{2\arrayrulewidth}
    \end{tabular}%
    }
    }
    \subtable[$n = 12$]{
    \centering
    \label{Tab:up_bnd12}
    \resizebox{0.62\columnwidth}{!}{%
    \begin{tabular}{l*{3}{c}} 
    \Xhline{2\arrayrulewidth} \\ [-0.75em]
   Instance & $\gamma_{sdp}^u$ gap (\%) & $\gamma^u_{2}$ gap (\%) & $\gamma^u_{3}$   gap (\%)  \\ [0.5ex]
    \hline \\ [-0.5em]
    1 & 167.13 &111.33 &  47.80  \\
    2 & 185.42 &87.61 &  68.12 \\
    3 & 202.65 &102.34 &  64.42 \\
    4 & 204.27 &122.34 &  64.83  \\
    5 & 178.62 &74.29  &  37.75 \\
    6 & 182.15 &94.03 &  48.05 \\
    7 & 141.15 &\textbf{59.91} &  \textbf{29.54} \\
    8 & 223.04 &118.75 &  64.48 \\
    9 & 173.84&82.05 & 43.89 \\
    10 & 154.89 &98.19 & 51.36 \\ [1ex] 
    \Xhline{2\arrayrulewidth} \\ [-0.5em]
    Average & 181.32 &\textbf{95.04} &\textbf{52.02}   \\ [1ex]
    \Xhline{2\arrayrulewidth}
    \end{tabular}%
    }
    }
    \label{Tab:UB}
\end{table}

In  Tables \ref{Tab:up_bnd8}, \ref{Tab:up_bnd10}, and \ref{Tab:up_bnd12}, the best gaps obtained for instances of eight, ten, and twelve nodes via $\mathcal{F}^u_{2}$ relaxation are \textbf{38.33\%} (106.41\%), \textbf{46.03\%} (112.88\%), and \textbf{59.91\%} (141.15\%), respectively, where the values within parenthesis represent binary relaxation gaps. Similarly, the best gaps obtained for instances of eight, ten, and twelve nodes via ${\mathcal F}^u_{3}$ relaxation are \textbf{0.50\%} (118.82\%), \textbf{12.34\%} (112.88\%), and \textbf{29.54\%} (141.15\%). In case of ${\mathcal F}^u_{4}$ relaxation, we obtain \textbf{0.01\%} (105.91\%) and \textbf{0.20\%} (146.74\%) as best gaps for instances of eight and ten nodes, respectively. However, ${\mathcal F}^u_{4}$ relaxation for instances of twelve nodes times out (1 hr. wall time limit). \textcolor{black}{The upper bounds achieved by the proposed algorithm are significantly better than easy-to-find $\gamma^u_{sdp}$ values, as observed in the `Average' row in Tables \ref{Tab:up_bnd8}, \ref{Tab:up_bnd10}, and \ref{Tab:up_bnd12}.} As anticipated, including larger principal sub-matrix cuts via OA reduces upper bound gaps, albeit with longer runtimes.

\subsubsection{\underline{Runtimes for solving $\mathcal{F}_1$ to optimality}} 
The runtime for obtaining optimal solutions of $\mathcal{F}_1$ is notably reduced for medium-sized instances by coupling sparse eigenvector cuts for smaller sub-matrices with denser cuts of larger sub-matrices. \textcolor{black}{In Table \ref{Tab:CT_comp}, we compare the runtimes across nine and ten node instances using eigenvectors of various principal sub-matrix size sets as cutting planes, including $\mathcal{M} = \{2,n\}, \{3,n\}, \{4,n\}, \{2,3,n\}, \{2,4,n\},$ and $\{3,4,n\}$ within Algorithm \ref{alg:ub_oa}. These results are contrasted with the cutting plane algorithm based on eigenvectors of size $n$ (i.e., $\mathcal{M} = \{n\}$ within Algorithm \ref{alg:ub_oa}), shown in column `$\{n\}$' in Table \ref{Tab:CT_comp}.}

\begin{table}[ht!]
\centering
    \caption{\textcolor{black}{Comparing runtimes of Algorithm \ref{alg:ub_oa} for $\mathcal{F}_1$ using eigenvector cuts of various principal sub-matrix sizes with an existing algorithm based on eigenvector cuts of only size $n$. This comparison is across instances with nine and ten nodes.}}
    \label{Tab:CT_comp}
    \resizebox{\columnwidth}{!}{%
    \begin{tabular}{p{0.5cm} p{1.75cm} p{1cm} p{1cm} p{1cm} p{1cm} p{1cm} p{1cm} p{1cm}} 
    \Xhline{2\arrayrulewidth} \\ [-1em]
   $n$ & Run time (s) &  &  &  & $\mathcal{M}$ &  &  & \\ [0.5ex]
    \hline \\ [-1em]
       &  &  $\{n\}$ & $\{2,n\}$ & $\{3,n\}$ & $\{4,n\}$ &  $\{2,3,n\}$ &  $\{2,4,n\}$ &  $\{3,4,n\}$\\ [0.5ex]
    \hline \\ [-0.5em]
    9 & Average & 48.6 & \textbf{35} & \textbf{38.1} & 122.4 & \textbf{48.1} & 113.7 & 147.8  \\
      & Minimum & 13.6 & \textbf{9.7} & \textbf{11.8} & 41.4 & 14.9 & 31.8 & 28.5  \\
      & Median & 34.0 & 34.8 & \textbf{31.7} & 93.1 & 36.5 &  94.8 & 105.3\\
      & Maximum & 223 & \textbf{102.9} & \textbf{106.9} & 329.1 & \textbf{184.7} & 310.8 & 510.0  \\[1ex]
    \hline \\ [-0.5em]
   10 & Average & 1179.3 & \textbf{838.7} & \textbf{612.2} & 1797.6 & \textbf{770.0} & 1906.8 & 2188.2  \\
      & Minimum & 35.1 & \textbf{35.1}    & 42.9  & 116.3 & 44.6 & 121.3 &  176.8 \\
      & Median & 515.6 & \textbf{354.9}   & \textbf{386.0} & 1091.1 & \textbf{426.9} & 1167.2 &  1479.3 \\
      & Maximum & 5337.4 & 6359.3 & \textbf{2386.3} & 7467.2 & \textbf{3783.6} & 6911.7 &  9853.6 \\
  [1ex]
    \Xhline{2\arrayrulewidth}
    \end{tabular}%
    }
\end{table}

Table \ref{Tab:CT_comp} reveals a significant decrease in average runtimes for obtaining optimal solutions when including eigenvectors of principal sub-matrices of sizes $2\times 2$ and $3\times 3$ (i.e., $\mathcal{M} = \{2,n\}, \{3,n\}$, and $\{2,3,n\}$). However, for cases where $\mathcal{M} = \{4,n\}, \{2, 4,n\}$, and $\{3,4,n\}$, the runtimes increase due to the verification of higher number of $4 \times 4$ principal sub-matrices. Moreover, the corresponding cuts are denser than those for the smaller $2\times 2$ and $3\times 3$ principal sub-matrices.

\subsection{Performance of the DCLBF (${\mathcal F}_{k}^{l}$)}

\subsubsection{\underline{DCLBF solutions}} For problem instances larger than ten nodes, the optimal solutions of $\mathcal{F}_1$ are unknown. Utilizing the DCLBF ${\mathcal F}_{k}^{l}$ in \eqref{eq:F_{d_k}}, we can obtain reasonable lower bounds in significantly less runtime for larger instances.

\begin{figure}[ht]
    \centering
	\subfigure[{$n = 15, k = 5  $}]{
	\includegraphics[scale=0.34]{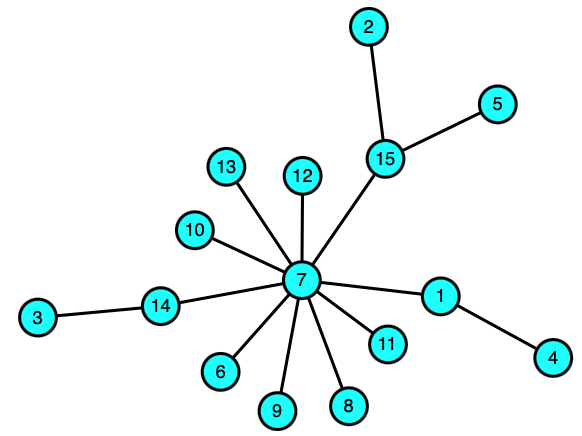}}
	\subfigure[{$n = 25, k = 9 $}]{
	\includegraphics[scale=0.255]{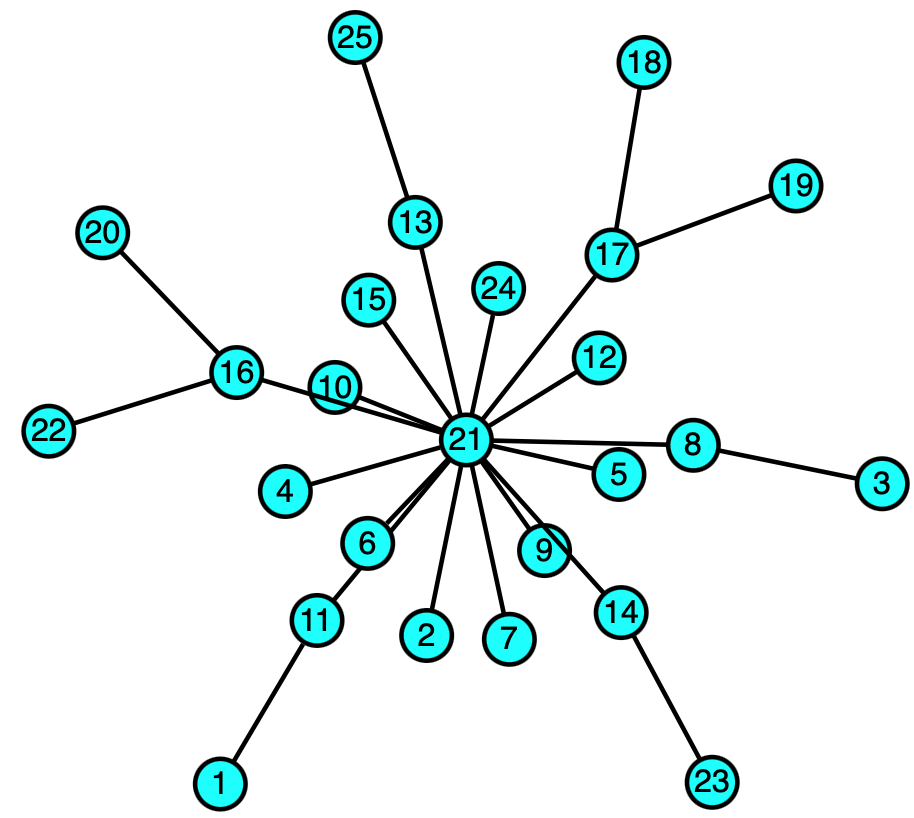}}
    \subfigure[{$n = 60, k = 27 $}]{
	\includegraphics[scale=0.235]{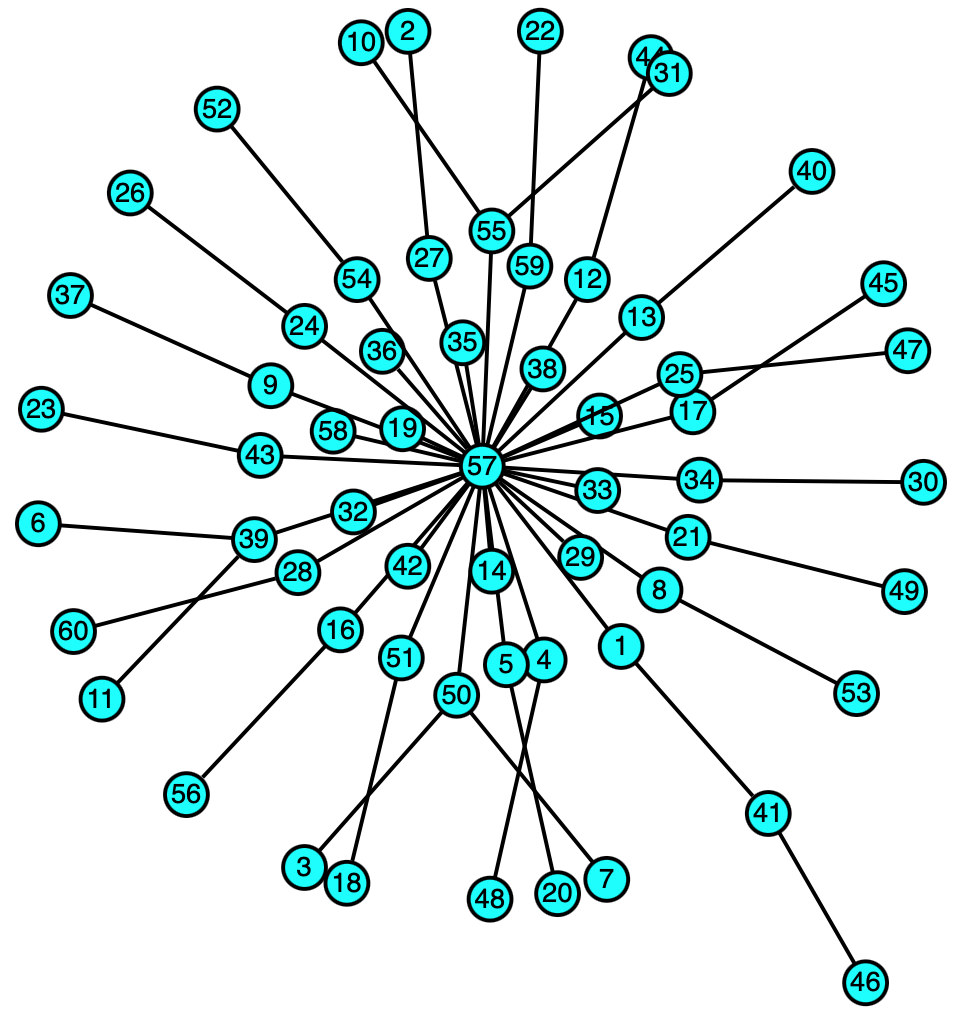}}
    \subfigure[{$n = 100, k = 16 $}]{
	\includegraphics[scale=0.25]{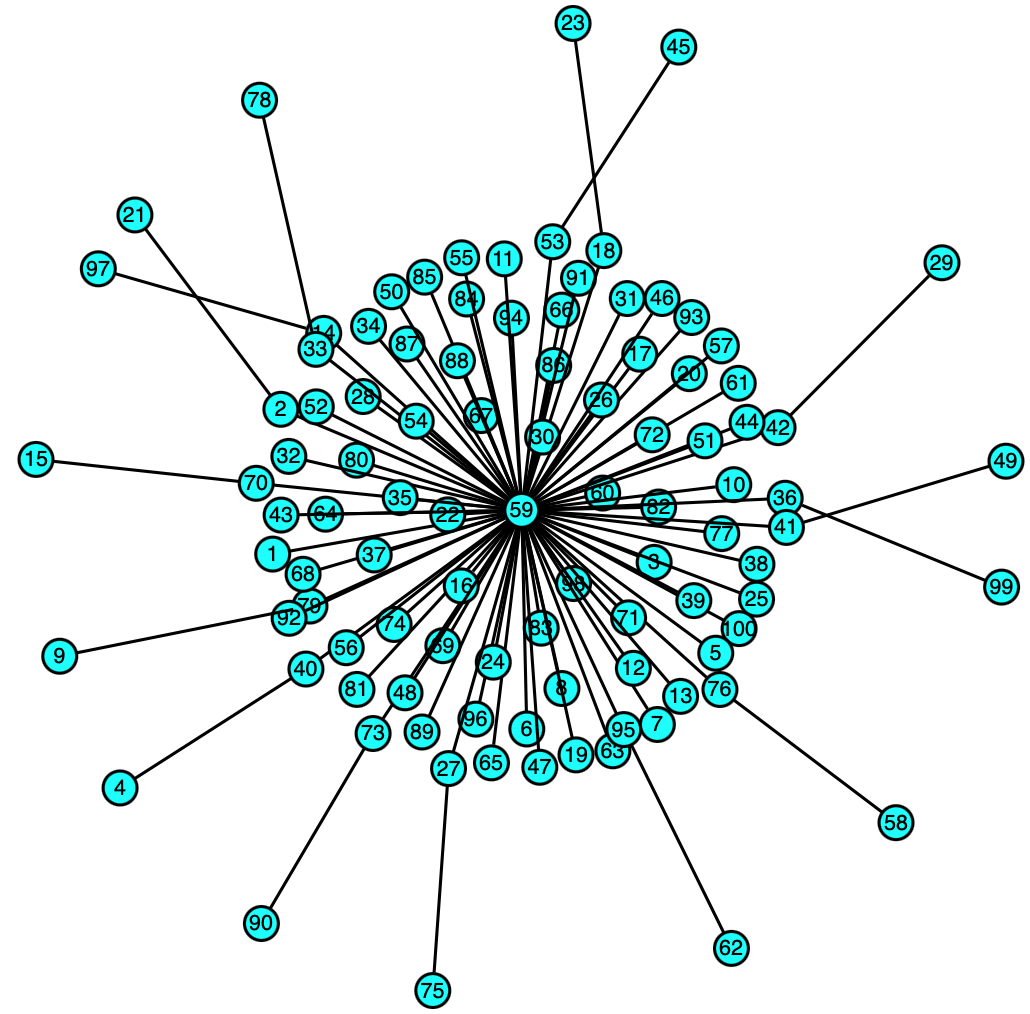}}
    \caption{\textcolor{black}{The best known feasible networks for  varying instance sizes obtained using DCLBF ${\mathcal F}_{k}^{l}$ in \eqref{eq:F_{d_k}} with various $k$ values. Edge weights are omitted for clarity.}}
    \label{Fig:Feasible_networks}
\end{figure}

For a subset of instances larger than ten nodes, the best-known feasible solutions are presented in Fig. \ref{Fig:Feasible_networks}, computed in an average time of less than {\it thirty minutes}.

\subsubsection{\underline{Performance of DCLBF for varying $k$ values}}
In the DCLBF \eqref{eq:F_{d_k}}, the quality of the lower bounding solution and its runtime depends on the value of $k$ chosen, as the size of the feasible set changes with $k$. Fig. \ref{Fig:k} compares the solution quality and runtimes for different $k$ values for ten-node instances. The solution quality improves for all instances of ten nodes with increasing $k$. \textcolor{black}{As discussed in Remark \ref{remark:dclbf_bounds}, optimal solutions (as shown in Table \ref{tab:optimal_ac}) are achieved when the DCLBF is computed at $k=4$ in \eqref{eq:F_{d_k}}, leading to significantly reduced runtimes compared to those of $\mathcal{F}_1$ in Table \ref{Tab:CT_comp}. A similar trend was observed for larger instances.}

\begin{figure}[ht]
    \centering
    \resizebox{0.8\columnwidth}{!}{%
\begin{tikzpicture}
\begin{axis}[
    x tick label style={/pgf/number format/1000 sep=},
    ybar=2.5pt,
    bar width=6pt,
    x=1.5cm,
    ymin=0,
    ymax=45,
    xtick=data,
    enlarge x limits=0.07,
    ylabel={Algebraic connectivity $\gamma^{l}_{k}$},
    ymajorgrids,
    label style={font=\huge},
    tick label style={font=\Large}  
    ]
\addplot  coordinates {( 1,23.8864) ( 2,	18.0900	) ( 3,28.6787	) ( 4,14.4100	) ( 5,12.0749	) ( 6,23.8447	) ( 7,32.5620	) ( 8,34.1772	) ( 9,12.8389	) ( 10,16.6890)};
\addplot coordinates {( 1,31.1254	) ( 2,38.9524	) ( 3,37.1088	) ( 4,31.4042	) ( 5,26.1102	) ( 6,24.3869	) ( 7,32.6700	) ( 8,34.8590	) ( 9,39.4034	) ( 10,31.9314)};
\addplot  coordinates {( 1,34.2371	) ( 2,41.4488	) ( 3,37.7309	) ( 4,41.4618	) ( 5,34.3193	) ( 6,39.9727	) ( 7,36.1651	) ( 8,34.8590	) ( 9,39.4034	) ( 10,33.6339)};
\addplot coordinates {( 1,34.2371	) ( 2,41.4488	) ( 3,37.7309	) ( 4,41.4618	) ( 5,34.3193	) ( 6,39.9727	) ( 7,36.1651	) ( 8,42.3291	) ( 9,39.4034	) ( 10,34.9161)};
\end{axis}
\end{tikzpicture}%
}
\resizebox{0.8\columnwidth}{!}{%
\begin{tikzpicture}
\begin{axis}[
    x tick label style={/pgf/number format/1000 sep=},
    ybar=2.5pt,
    bar width=6pt,
    x=1.5cm,
    ymin=0,
    ymax=19,
    xtick=data,
    enlarge x limits=0.07,
    legend style={at={(0.5,-0.25)}, anchor=north,legend columns=-1, style={column sep=0.4cm}, font=\Large},
    xlabel={Instance no.},
    ylabel={Run time (s)},
    ymajorgrids,
    label style={font=\huge},
    tick label style={font=\Large}  
    ]
\addplot coordinates {( 1,0.7921) ( 2,0.8789) ( 3,0.7419) ( 4,0.7323) ( 5,0.8128) ( 6,0.7823) ( 7,0.7569) ( 8,0.7759) ( 9,0.7340) ( 10,0.8919)};
\addplot coordinates {( 1,0.8497) ( 2,0.8812) ( 3,0.8461) ( 4,0.9009) ( 5,0.8262) ( 6,0.8341) ( 7,0.9637) ( 8,0.9988) ( 9,0.9068) ( 10,0.9158)};
\addplot  coordinates {( 1,3.1154) ( 2,1.3358) ( 3,3.0079) ( 4,1.3069) ( 5,2.0901) ( 6,1.0573) ( 7,1.4837) ( 8,3.2635) ( 9,2.6844) ( 10,3.3833)};
\addplot coordinates {( 1,15.1147) ( 2,18.0393) ( 3,17.0430) ( 4,5.1511) ( 5,4.4396) ( 6,4.0093) ( 7,16.1823) ( 8,12.7117) ( 9,12.5035) ( 10,13.7834)};

\legend{$k$ = 1, $k$ = 2, $k$ = 3, $k$ = 4}
\end{axis}
\end{tikzpicture}%
}
\caption{Comparing the quality of the solutions and run times for different values of $k$  for ten node instances.}
\label{Fig:k}
\end{figure}
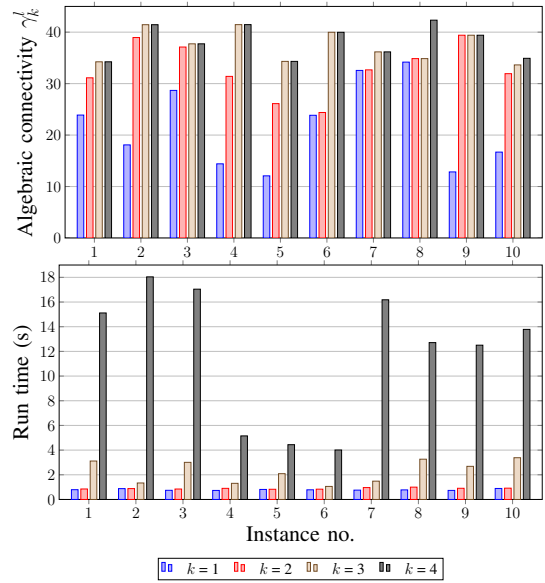

\subsection{Performance of the MCH}
\subsubsection{\underline{Solution quality of the MCH}}
Table \ref{Tab:MCH_quality} compares MCH with ${\mathcal F}_{k}^{l}$ for 10, 12, and 15-node instances, assessing heuristic quality and convergence speed. Gaps shown in Table \ref{Tab:MCH_quality} are evaluated as $(\frac{\gamma^{l}_{k} - \gamma_h}{\gamma^{l}_{k}} * 100)$. Here, $t_1$ and $t_2$ are the runtimes for MCH and $\mathcal{F}^l_k$, respectively. For all instance sizes, gaps consistently remain below \textbf{2\%}. With $(h_1, h_2)$ set to (5,5), MCH computes high-quality feasible solutions in approximately \textit{one second}.

\begin{table}[ht!]
\centering
    \caption{Average gaps and runtimes (for 50 instances) for proposed maximum cost heuristic (MCH) and the lower bounding formulation solutions for instances of varying sizes, with the corresponding $k$, $h_1$, and $h_2$ values.}
    \label{Tab:MCH_quality}
    \resizebox{0.8\columnwidth}{!}{%
    \begin{tabular}{p{0.75cm} p{0.75cm} p{0.75cm} p{0.75cm} p{1.5cm} p{1cm} p{1cm}}
    \Xhline{2\arrayrulewidth} \\ [-0.75em]
   $n$ &$k$ &$h_1$ & $h_2$ & gap (\%) &$t_1$ (s) &$t_2$ (s)\\ [0.5ex]
    \hline \\ [-0.5em]
    10 & 4 & 5 & 5 & 0.21 & \textbf{0.26} & 26.29 \\
    12 & 5 & 5 & 5 & 0.41 & \textbf{0.49} & 1912.89\\
    15 & 4 & 5 & 5 & 1.67 & \textbf{1.03} & 530.90\\ [1ex]
    \Xhline{2\arrayrulewidth}
    \end{tabular}%
    }
\end{table}
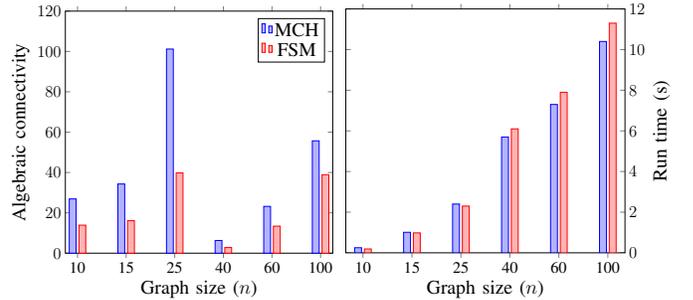
\begin{figure}[hbt!]
    \centering
    \resizebox{0.4925\columnwidth}{!}{%
\begin{tikzpicture}
\begin{axis}[
    x tick label style={/pgf/number format/1000 sep=},
    ybar=2.5pt,
    bar width=6pt,
    x=1.5cm,
    ymin=0,
    ymax=120,
    xticklabels = {, , 10, 15, 25, 40, 60, 100},
    enlarge x limits=0.05,
    legend pos=north east,
    legend style={font=\LARGE},
    xlabel= Graph size ($n$),
    ylabel={Algebraic connectivity},
    label style={font=\LARGE},
    tick label style={font=\Large}  
    ]
\addplot  coordinates {( 1,26.9) ( 2,34.3) ( 3,101.2) ( 4,6.3) ( 5,23.2) (6,55.7)};
\addplot coordinates {( 1,13.9) ( 2,16.1) ( 3,39.8) ( 4,2.8) ( 5,13.4) (6,38.9)};
\legend{MCH, FSM}
\end{axis}
\end{tikzpicture}%
}
\resizebox{0.4925\columnwidth}{!}{%
\begin{tikzpicture}
\begin{axis}[
    x tick label style={/pgf/number format/1000 sep=},
    ylabel near ticks, yticklabel pos=right,
    ybar=2.5pt,
    bar width=6pt,
    x=1.5cm,
    ymin=0,
    ymax=12,
    enlarge x limits=0.1,
    xticklabels = {, ,10, 15, 25, 40, 60, 100},
    enlarge x limits=0.07,
    xlabel= Graph size ($n$),
    ylabel={Run time (s)},
    label style={font=\LARGE},
    tick label style={font=\Large}  
    ]
\addplot  coordinates {( 1,0.25) ( 2,1.01) ( 3,2.4) ( 4,5.7) ( 5,7.3) (6,10.4)};
\addplot coordinates {( 1,0.19) ( 2,0.98) ( 3,2.3) ( 4,6.1) ( 5,7.9) (6,11.3)};

\end{axis}
\end{tikzpicture}%
}
\caption{\textcolor{black}{Comparing the solutions and runtimes of the proposed maximum cost heuristic (MCH) with parameters ($h_1, h_2$) set to (5, 3) against FSM \cite{son2010building} for networks up to one hundred nodes. The node degree upper bound is set to five.}}
\label{Fig:MCHvsFSM}
\end{figure}
\subsubsection{\underline{Comparison of MCH and FSM algorithms}}
\textcolor{black}{Figure \ref{Fig:MCHvsFSM} compares {\ac} values and runtimes of spanning trees generated by MCH and FSM algorithms, for the application discussed in Section \ref{subsec:Associated}. The comparison spans networks of various sizes, up to one hundred nodes, with an upper bound on the node degree set to $d = 5$.}

\textcolor{black}{The MCH algorithm consistently outperforms the FSM algorithm \cite{son2010building}, yielding spanning trees with higher {\ac} across all instances, with an average objective improvement of \textbf{17.32\%}. Additionally, MCH algorithm's runtimes are comparable to or better than the FSM algorithm for large-scale instances.}

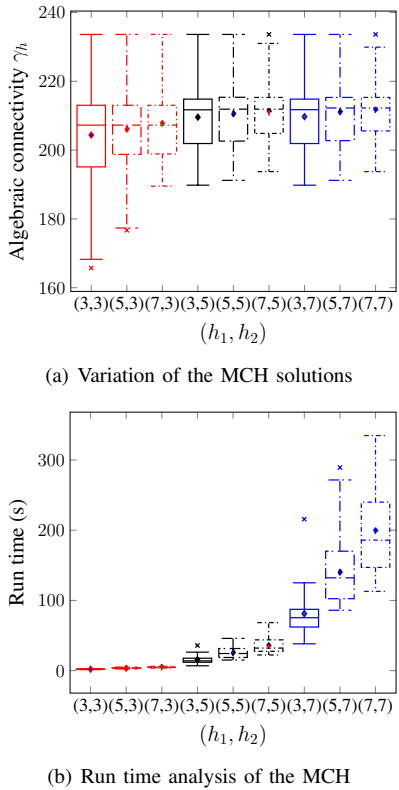
\begin{figure}[hbt!]
    \centering
    \subfigure[{Variation of the MCH solutions}]{
    \resizebox{0.6\columnwidth}{!}{%
    \begin{tikzpicture}
  \pgfplotsset{width=7.5cm, compat=1.5}
  \begin{axis}
    [
    boxplot/draw direction=y,
    xtick={1,2,3,4,5,6,7,8,9},
    scale only axis,
    xlabel=${(h_1,h_2)}$,
    ylabel=Algebraic connectivity $\gamma_h$,
    enlarge x limits=0.02,
    xticklabels={{(3,3)},{(5,3)},{(7,3)},{(3,5)},{(5,5)},{(7,5)},{(3,7)},{(5,7)},{(7,7)}},
    label style={font=\Large},
    tick label style={font=\large}  
    ]
    \addplot+[draw=red,thick,solid, mark = x,
    boxplot prepared={
      median=207.2673,
      average = 204.3963,
      upper quartile=213.0392,
      lower quartile=195.1228,
      upper whisker= 233.6343,
      lower whisker=168.2482
    },
    ] coordinates {(1,165.7182)};
    \addplot+[draw=red,thick,dash pattern={on 7pt off 2pt on 1pt off 3pt},mark = x,
    boxplot prepared={
      median=207.2673,
      average = 206.1439,
      upper quartile=213.0392,
      lower quartile=198.7631,
      upper whisker=233.6343,
      lower whisker=177.34895
    },
    ] coordinates {(1,176.704)};
    \addplot+[draw=red,thick,dash dot,mark = x,
    boxplot prepared={
      median=207.2673,
       average = 207.7205,
      upper quartile=213.0392,
      lower quartile=198.8882,
      upper whisker=233.6343,
      lower whisker= 189.5281
    },
    ] coordinates {};
    \addplot+[draw=black,thick,solid,mark = x,
    boxplot prepared={
      median=211.7223,
       average = 209.5786,
      upper quartile=214.826,
      lower quartile=201.9168,
      upper whisker=233.6343,
      lower whisker=189.8081
    },
    ] coordinates {};
    \addplot+[draw=black,thick,dash pattern={on 7pt off 2pt on 1pt off 3pt},mark = x,
    boxplot prepared={
      median=211.9033,
      average = 210.6189,
      upper quartile=215.36775,
      lower quartile=202.6205,
      upper whisker=233.6343,
      lower whisker=191.2118
    },
    ] coordinates {};
        \addplot+[draw=black,thick,dash dot,mark = x,
    boxplot prepared={
      median=211.9033,
       average = 211.2336,
      upper quartile=215.3194,
      lower quartile=204.8546,
      upper whisker=231.0166,
      lower whisker= 193.7585
    },
    ] coordinates {(1,233.6343)};
        \addplot+[draw=blue,thick,solid,mark = x,
    boxplot prepared={
      median=211.7598,
       average = 209.7275,
      upper quartile=214.826,
      lower quartile=201.9168,
      upper whisker=233.6343,
      lower whisker=189.8081
      },
    ] coordinates {};
        \addplot+[draw=blue,thick,dash pattern={on 7pt off 2pt on 1pt off 3pt},mark = x,
    boxplot prepared={
      median=212.253,
       average = 211.1981,
      upper quartile=215.3194,
      lower quartile=202.7518,
      upper whisker=233.6343,
      lower whisker=191.2118
    },
    ] coordinates {};
        \addplot+[draw=blue,thick,dash dot,mark = x,
    boxplot prepared={
      median=212.253,
      average = 211.6896,
      upper quartile=215.3194,
      lower quartile=205.5732,
      upper whisker=229.9387,
      lower whisker=193.7585
      },
    ] coordinates {(1,233.6343)};
  \end{axis}
\end{tikzpicture}
\label{Fig:H_25a}
}}
    \subfigure[{Run time analysis of the MCH}]{
    \resizebox{0.6\columnwidth}{!}{%
    \begin{tikzpicture}
  \pgfplotsset{width=7.5cm, compat=1.5}
  \begin{axis}
        [
    boxplot/draw direction=y,
    xtick={1,2,3,4,5,6,7,8,9},
    scale only axis,
    xlabel=${(h_1,h_2)}$,
    ylabel=Run time (s),
    enlarge x limits=0.02,
    xticklabels={{(3,3)},{(5,3)},{(7,3)},{(3,5)},{(5,5)},{(7,5)},{(3,7)},{(5,7)},{(7,7)}},
    label style={font=\Large},
    tick label style={font=\large} 
    ]
    \addplot+[draw=red,thick,solid,
    boxplot prepared={
      median=1.97,
      average = 2.05,
      upper quartile=2.09,
      lower quartile=1.86,
      upper whisker=2.7,
      lower whisker=1.36
    },
    ] coordinates {};
    \addplot+[draw=red,thick,dash pattern={on 7pt off 2pt on 1pt off 3pt},
    boxplot prepared={
      median=3.47,
      average =3.48,
      upper quartile=3.91,
      lower quartile=3.09,
      upper whisker=4.94,
      lower whisker=2.42
    },
    ] coordinates {};
    \addplot+[draw=red,thick,dash dot,
    boxplot prepared={
      median=4.79,
       average = 4.77,
      upper quartile=5.23,
      lower quartile=4.35,
      upper whisker=6.04,
      lower whisker=3.61
    },
    ] coordinates {};
    \addplot+[draw=black,thick,solid,mark = x,
    boxplot prepared={
      median=13.87,
       average = 15.61,
      upper quartile=17.68,
      lower quartile=11.93,
     upper whisker=26.3,
      lower whisker=6.92
    },
    ] coordinates {(1,35.81)};
    \addplot+[draw=black,thick,dash pattern={on 7pt off 2pt on 1pt off 3pt},
    boxplot prepared={
      median=24.38,
       average = 25.91,
      upper quartile=31.49,
      lower quartile=18.71,
      upper whisker=45.9,
      lower whisker=15.08
    },
    ] coordinates {};
        \addplot+[draw=black,thick,dash dot,
    boxplot prepared={
      median=32.1,
       average =35.52,
      upper quartile=43.82,
      lower quartile=27.46,
      upper whisker=68.36,
      lower whisker=22.27
    },
    ] coordinates {};
        \addplot+[draw=blue,thick,solid,mark = x,
    boxplot prepared={
      median=75.36,
       average =81.15,
      upper quartile=87.3,
      lower quartile=62.12,
      upper whisker=125.05,
      lower whisker=38.17
      },
    ] coordinates {(1,215.68)};
        \addplot+[draw=blue,thick,dash pattern={on 7pt off 2pt on 1pt off 3pt},mark = x,
    boxplot prepared={
      median=132.13,
       average =140.39,
      upper quartile=170.08,
      lower quartile=102.43,
      upper whisker=271.54,
      lower whisker=85.98
    },
    ] coordinates {(1,289.24)};
        \addplot+[draw=blue,thick,dash dot,
    boxplot prepared={
      median=185.82,
       average = 199.17,
      upper quartile=239.94,
      lower quartile=146.98,
      upper whisker=334.78,
      lower whisker=112.86
      },
    ] coordinates {};
  \end{axis}
\end{tikzpicture}
\label{Fig:H_25b}
}}

\caption{\textcolor{black}{Comparing the solutions and runtimes of the  maximum cost heuristic (MCH) for 25-node instances ($k=10$) with varying heuristic parameters. Line types represent variations in $h_1$, while colors denote changes in $h_2$.}}
\vspace{-0.3cm}
\label{Fig:H_25}
\end{figure}
\subsubsection{\underline{Variability of the MCH}}
\textcolor{black}{Due to the accelerated convergence of the MCH, high-quality feasible solutions are achieved for problem instances with sizes up to one hundred nodes. However, the solution quality and runtime are influenced by the selected heuristic parameters ($h_1, h_2$)}. 

Fig. \ref{Fig:H_25a} displays box plots of MCH solutions of twenty-five node instances for various heuristic parameter sets ($h_1, h_2$), illustrating {\ac} variations. Each box plot represents variation across fifty random instances. Higher values of $h_1$ or $h_2$ generally improve solution quality. Conversely, Fig. \ref{Fig:H_25b} shows an increasing runtime trend with higher heuristic parameter values. 
Opting for higher $h_1$ and reasonable $h_2$ values is preferable for obtaining high-quality solutions in less time. 
Implementing MCH utilizing Julia's parallel computing features can further reduce these runtimes.

\subsection{\textcolor{black}{Robustness of cooperative vehicle localization networks}}

\textcolor{black}{In Section \ref{subsec:coop_vehicle}, we introduced the problem related to cooperative vehicle localization under noisy measurements. To assess the robustness of different networks for this application, we compare the spectral norms of the state estimation error covariance matrix ($\mathbf{P}$) for ten vehicle instances. This analysis involves diverse communication networks, and the results are illustrated in Fig. \ref{Fig:robustness}. Specifically, our focus is on spanning tree networks among the vehicles, which include only $n-1$ communication links for $n$ vehicles. The box plot in Fig. \ref{Fig:robustness} shows the spectral norms of $\mathbf{P}$ for fifty random networks, encompassing star and chain networks, the maximum and minimum spanning tree networks, and the optimal network of $\mathcal{F}_1$ in \eqref{eq:F_1}. Throughout all instances in Fig. \ref{Fig:robustness}, the communication networks with the highest {\ac} consistently exhibit the smallest spectral norm of $\mathbf{P}$, indicated by red triangles. This observation signifies that the states of the vehicles are estimated most accurately with the communication network possessing the highest {\ac}, thereby emphasizing the robustness of this network.}

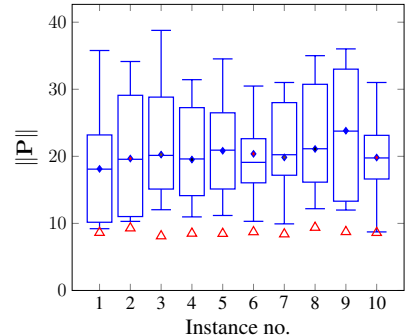
\begin{figure}[hbt!]
    \centering
    \resizebox{0.6\columnwidth}{!}{%
    \begin{tikzpicture}
  \pgfplotsset{width=7.5cm, compat=1.5}
  \begin{axis}
    [
    boxplot/draw direction=y,
    xtick={1,2,3,4,5,6,7,8,9,10},
    scale only axis,
    xlabel=Instance no.,
    ylabel=$\|\mathbf{P}\|$,
    ymin = 0,
    enlarge x limits=0.05,
    xticklabels={1, 2, 3, 4, 5, 6, 7, 8, 9, 10},
    label style={font=\Large},
    tick label style={font=\large} 
    ]
    \addplot+[draw=blue,thick,solid,
    boxplot prepared={
      median=18.09,
      average=18.13,
      upper quartile=23.19,
      lower quartile=10.17,
      upper whisker=35.78,
      lower whisker=9.21
    },
    ] coordinates {};
    \addplot+[draw=blue,thick,solid,
    boxplot prepared={
      median=19.56,
      average=19.67,
      upper quartile=29.1,
      lower quartile=11.0203,
      upper whisker=34.14,
      lower whisker=10.3
    },
    ] coordinates {};
    \addplot+[draw=blue,thick,solid,
    boxplot prepared={
      median=20.14,
      average=20.26,
      upper quartile=28.83,
      lower quartile=15.12,
      upper whisker=38.78,
      lower whisker=12.04
    },
    ] coordinates {};
    \addplot+[draw=blue,thick,solid,
    boxplot prepared={
      median=19.61,
      average=19.52,
      upper quartile=27.26,
      lower quartile=14.14,
      upper whisker=31.43,
      lower whisker=10.97
    },
    ] coordinates {};
    \addplot+[draw=blue,thick,solid,
    boxplot prepared={
      median=20.92,
      average=20.83,
      upper quartile=26.49,
      lower quartile=15.13,
      upper whisker=34.53,
      lower whisker=11.18
    },
    ] coordinates {};
        \addplot+[draw=blue,thick,solid,
    boxplot prepared={
      median=19.10,
      average=20.37,
      upper quartile=22.63,
      lower quartile=16.05,
      upper whisker=30.48,
      lower whisker= 10.31
    },
    ] coordinates {};
        \addplot+[draw=blue,thick,solid,
    boxplot prepared={
      median=20.24,
      average=19.84,
      upper quartile=28.01,
      lower quartile=17.18,
      upper whisker=31.01,
      lower whisker=9.93
    },
    ] coordinates {};
        \addplot+[draw=blue,thick,solid,
    boxplot prepared={
      median=21.13,
      average=21.10,
      upper quartile=30.74,
      lower quartile=16.15,
      upper whisker=35.01,
      lower whisker=12.19
    },
    ] coordinates {};
        \addplot+[draw=blue,thick,solid,
    boxplot prepared={
      median=23.76,
      average=23.82,
      upper quartile=33.00,
      lower quartile=13.31,
      upper whisker=36.02,
      lower whisker=11.99
    },
    ] coordinates {};
            \addplot+[draw=blue,thick,solid,
    boxplot prepared={
      median=19.76,
      average=19.82,
      upper quartile=23.13,
      lower quartile=16.62,
      upper whisker=31.01,
      lower whisker=8.73
    },
    ] coordinates {};
    \addplot[draw=red,thick,only marks,fill,mark=triangle,mark size=3.5pt] 
    coordinates {( 1,8.62) ( 2,9.29) ( 3,8.11) ( 4,8.50) ( 5,8.47) ( 6,8.72) ( 7,8.40) ( 8,9.38) ( 9,8.74) ( 10,8.61)};
  \end{axis}
\end{tikzpicture}
}
\caption{\textcolor{black}{Comparing the spectral norm of the state estimation error covariance matrix ($\mathbf{P}$) for 10 vehicle instances ($n=10$) for various communication networks. Red triangles denote spectral norms corresponding to optimal networks of $\mathcal{F}_1$ with maximum {\ac}.}}
\label{Fig:robustness}
\end{figure}

\section{Conclusions}
\label{Sec:conclusions}

This paper tackles the problem of maximizing the {\ac} for weighted networks, particularly in the context of cooperative vehicle localization under noisy measurements. A novel cutting plane-based upper bounding algorithm is introduced for this purpose, leveraging the principal minors characterization of positive semi-definite matrices.  The proposed algorithm demonstrates a notable improvement over existing methods by achieving tighter upper bounds with a reduced computational overhead compared to solving semi-definite programs with relaxed binary variables. Additionally, by integrating sparser principal sub-matrix cuts alongside denser cuts, the algorithm achieves faster runtimes. To address problems where obtaining optimal solutions is impractical, a degree-constrained Mixed-Integer Semi-Definite Programming (MISDP) formulation is presented to obtain lower bounds. \textcolor{black}{Furthermore, a maximum cost heuristic is proposed to quickly generate near-optimal solutions for larger networks, surpassing the performance of previously known fragment and selection-merging algorithms. Finally, a robustness comparison highlights the significance of selecting networks with higher {\ac} cooperative vehicle localization.}

\section*{Acknowledgements}
The authors gratefully acknowledge funding from Triad National Security LLC under the grant from the DOE NNSA (award no. 89233218CNA000001), titled ``Modeling and Discrete Optimization Algorithms for Robust Complex Networks'' and U.S. DOE's Laboratory Directed Research \& Development program under the project ``20230091ER: Learning to Accelerate Global Solutions for Non-convex Optimization''.

\bibliographystyle{IEEEtran}
\bibliography{References}

\section*{Appendix}
\begin{table}[ht]
\centering
\caption{Optimal ($\gamma^*$) and best found ($\gamma_{bfs}$) {\ac} values for MISDP in $\mathcal{F}_1$ across networks with up to fifteen nodes. They also serve as lower bounds for evaluating optimality gaps in section \ref{subsec:UB_perf}.}
\resizebox{0.9\columnwidth}{!}{%
\begin{tabular}{cccccc}
\Xhline{2\arrayrulewidth} \\ [-0.75em]
Instance & $\gamma^*$ & $\gamma^*$ & $\gamma^*$ & $\gamma_{bfs}$ & $\gamma_{bfs}$ \\  [0.5ex]
 & $n=8$ & $n=9$ &  $n=10$ & $n=12$ & $n=15$ \\ [0.5ex]
   \Xhline{2\arrayrulewidth} \\ [-0.75em]
    1	&	22.8042	& 28.2168	& 34.2371	&	54.0522	& 74.2785\\
    2	&	24.3207	& 26.3675	& 41.4488	&	53.2107	& 77.9973\\
    3	&	26.4111	& 29.8184 	& 37.7309	&	47.2228	&  80.0353\\
    4	&	28.6912	& 25.8427	& 41.4618	&	43.9330	& 89.7253\\
    5	&	22.5051	& 24.2756	& 34.3193	&	51.1286	& 77.2098\\
    6	&	25.2167	& 30.0202	& 39.9727	&	56.9622	& 64.1931\\
    7	&	22.8752	& 25.6410	& 36.1651	&	57.2901	& 80.7137\\
    8	&	28.4397	& 26.9705	& 42.3291	&	53.2338	& 75.7184\\
    9	&	26.7965	& 33.5068	& 39.4034	&	53.5628	& 85.7582\\
    10	&	27.4913	& 31.7445	& 34.9161	&	50.6987	& 77.7706\\[0.5ex]
   \Xhline{2\arrayrulewidth}
\end{tabular}
}
\label{tab:optimal_ac}
\end{table}

\end{document}